\numberwithin{equation}{subsection}
\newcommand{\sqsp}{\renewcommand{\baselinestretch}{1.1}\tiny\normalsize}
\newtheorem{theorem}[subsection]{Theorem}
\newtheorem{proposition}[subsection]{Proposition}
\newtheorem{corollary}[subsection]{Corollary}
\theoremstyle{definition}
\newcommand{\br}{\mathfrak{B}_n}
\newcommand{\fg}{\mathfrak{g}}
\newcommand{\fh}{\mathfrak{h}}
\newcommand{\fgl}{\mathfrak{gl}}
\newcommand{\fsl}{\mathfrak{sl}}
\newcommand{\hei}{\mathsf{H}}
\newcommand{\bk}{\mathbf{k}}
\newcommand{\bC}{\mathbf{C}}
\newcommand{\bR}{\mathbf{R}}
\newcommand{\poincare}{\fsl(2)^*}
\DeclareMathOperator{\Hom}{Hom}
\DeclareMathOperator{\Aut}{Aut}
\def\psimatrix{{\begin{pmatrix}q^{-1} & 0 & 0 & 0\\ 0 & q^{-1} - q & 1 & 0\\ 0 & 1 & 0 & 0\\ 0 & 0 & 0 & q^{-1}\end{pmatrix}}}
\def\matrixb{{\begin{pmatrix}0 & b\\ 0 & 0 \end{pmatrix}}}
\def\matrixc{{\begin{pmatrix}0 & 0\\ c & 0 \end{pmatrix}}}
\def\matrixad{{\begin{pmatrix}a & 0\\ 0 & d \end{pmatrix}}}
\def\Phib{{\begin{pmatrix}0 & 0 & 0 & b^2\lambda \\
0 & 0 & 0 & 0 \\
0 & 0 & 0 & 0 \\
0 & 0 & 0 & 0 \\ \end{pmatrix}}}
\def\Phic{{\begin{pmatrix}0 & 0 & 0 & 0 \\
0 & 0 & 0 & 0 \\
0 & 0 & 0 & 0 \\
c^2\lambda & 0 & 0 & 0 \\\end{pmatrix}}}
\def\Phiad{{q\lambda\begin{pmatrix}a^2q^{-1} & 0 & 0 & 0\\
0 & ad\beta & ad & 0\\
0 & ad & 0 & 0\\
0 & 0 & 0 & d^2q^{-1}
\end{pmatrix}}}
\def\alphaone{{\begin{pmatrix}0 & 0 & 0 \\
0 & 0 & 0 \\
a_{31} & 0 & 0 \\ \end{pmatrix}}}
\def\alphatwo{{\begin{pmatrix}0 & 0 & 0 \\
a_{21} & 0 & 0 \\
0 & a_{32} & 0 \\ \end{pmatrix}}}
\def\alphathree{{\begin{pmatrix}0 & 0 & 0 \\
a_{21} & 0 & 0 \\
0 & 0 & a_{33} \\ \end{pmatrix}}}
\def\alphafour{{\begin{pmatrix}a_{11} & 0 & 0 \\
0 & a_{22} & 0 \\
0 & 0 & a_{33} \\ \end{pmatrix}}}
\def\alphafive{{\begin{pmatrix}a_{11} & 0 & 0 \\
0 & 0 & 0 \\
0 & a_{32} & 0 \\ \end{pmatrix}}}
\def\alphasix{{\begin{pmatrix}a_{11} & 0 & 0 \\
0 & 0 & a_{23} \\
0 & 0 & 0 \\ \end{pmatrix}}}
\def\alphaeight{{\begin{pmatrix}0 & a_{12} & 0 \\
0 & 0 & a_{23} \\
0 & 0 & 0 \\ \end{pmatrix}}}
\def\alphanine{{\begin{pmatrix}0 & a_{12} & 0 \\
0 & 0 & 0 \\
0 & 0 & a_{33} \\ \end{pmatrix}}}
\def\alphaten{{\begin{pmatrix}0 & 0 & a_{13} \\
0 & 0 & 0 \\
0 & 0 & 0 \\ \end{pmatrix}}}
\def\matrixH{{\begin{pmatrix}a_{22}a_{33} - a_{23}a_{32} & a_{12} & a_{13}\\
0 & a_{22} & a_{23}\\
0 & a_{32} & a_{33}\end{pmatrix}}}
\def\alphaHa{{\begin{pmatrix}1 & 0 & 0\\
0 & a & 0\\
0 & 0 & a^{-1}\end{pmatrix}}}
\def\alphaHb{{\begin{pmatrix}1 & 0 & 0\\
0 & b & 0\\
0 & 0 & b^{-1}\end{pmatrix}}}
\def\alphaPa{{\begin{pmatrix}1 & 0 & 0\\
a_{21} & a_{22} & a_{23}\\
a_{31} & a_{32} & a_{33}\end{pmatrix}}}
\def\alphaPb{{\begin{pmatrix}a_{11} & 0 & 0\\
a_{21} & 0 & 0\\
a_{31} & 0 & 0\end{pmatrix}}}
\def\alphaSa{{\begin{pmatrix}1 & c & a\\
-2ab & b & -a^2b\\
-2b^{-1}c & -b^{-1}c^2 & b^{-1}\end{pmatrix}}}
\def\alphaSb{{\begin{pmatrix}-1 & c & a\\
2b^{-1}c & -b^{-1}c^2 & b^{-1}\\
2ab & b & -a^2b\end{pmatrix}}}
\def\alphaSc{{\begin{pmatrix}c & a & \dfrac{1-c^2}{4a}\\
b & \dfrac{ab}{c-1} & \dfrac{b(1-c)}{4a}\\
\dfrac{1-c^2}{b} & \dfrac{a(1-c)}{b} & \dfrac{(c^2-1)(c+1)}{4ab}\end{pmatrix}}}
\def\alphaonea{{\begin{pmatrix}1 & 0 & 0\\
0 & a_{22} & 0\\
0 & 0 & a_{22}^{-1}\end{pmatrix}}}
\def\alphaoneb{{\begin{pmatrix}-1 & 0 & 0\\
0 & 0 & a_{32}^{-1}\\
0 & a_{32} & 0\end{pmatrix}}}
\def\alphatwoa{{\begin{pmatrix}-1 & 0 & a_{13}\\
0 & 0 & a_{32}^{-1}\\
2a_{13}a_{32} & a_{32} & -a_{13}^2a_{32}\end{pmatrix}}}
\def\alphatwob{{\begin{pmatrix}1 & 0 & a_{13}\\
-2a_{13}a_{22} & a_{22} & -a_{13}^2a_{22}\\
0 & 0 & a_{22}^{-1}\end{pmatrix}}}
\def\alphathreea{{\begin{pmatrix}1 & a_{12} & 0\\
0 & a_{22} & 0\\
-2a_{12}a_{22}^{-1} & -a_{12}^2a_{22}^{-1} & a_{22}^{-1}\end{pmatrix}}}
\def\alphathreeb{{\begin{pmatrix}-1 & a_{12} & 0\\
2a_{12}a_{32}^{-1} & -a_{12}^2a_{32}^{-1} & a_{32}^{-1}\\
0 & a_{32} & 0\end{pmatrix}}}
\begin{document}

\title{The Hom-Yang-Baxter equation and Hom-Lie algebras}
\author{Donald Yau}

\begin{abstract}
Motivated by recent work on Hom-Lie algebras, a twisted version of the Yang-Baxter equation, called the Hom-Yang-Baxter equation (HYBE), was introduced by the author in \cite{yau5}.   In this paper, several more classes of solutions of the HYBE are constructed.  Some of these solutions of the HYBE are closely related to the quantum enveloping algebra of $\fsl(2)$, the Jones-Conway polynomial, and Yetter-Drinfel'd modules.  Under some invertibility conditions, we construct a new infinite sequence of solutions of the HYBE from a given one.
\end{abstract}

\keywords{The Hom-Yang-Baxter Equation, Hom-Lie algebra, Yetter-Drinfel'd module.}

\subjclass[2000]{16W30, 17A30, 17B37, 81R50}

\address{Department of Mathematics\\
    The Ohio State University at Newark\\
    1179 University Drive\\
    Newark, OH 43055, USA}
\email{dyau@math.ohio-state.edu}

\date{\today}
\maketitle

\sqsp

\section{Introduction}

The purpose of this paper is to construct some concrete classes of solutions of the Hom-Yang-Baxter equation (HYBE), which was introduced by the author in \cite{yau5}.  Let us first recall some motivations for the HYBE.  In \cite{hls} a generalization of Lie algebras, called Hom-Lie algebras, were introduced in which the Jacobi identity is twisted by a linear self-map.  More precisely, a \emph{Hom-Lie algebra} $L = (L,[-,-],\alpha)$ consists of a vector space $L$, a bilinear skew-symmetric bracket $[-,-] \colon L \otimes L \to L$, and a linear self-map $\alpha \colon L \to L$ such that the following \emph{Hom-Jacobi identity} holds:
\begin{equation}
\label{eq:hom-jacobi}
[[x,y],\alpha(z)] + [[z,x],\alpha(y)] + [[y,z],\alpha(x)]] = 0.
\end{equation}
If, moreover, it satisfies
\[
\alpha[x,y] = [\alpha(x),\alpha(y)]
\]
for $x,y \in L$, then it is said to be \emph{multiplicative}.  A Lie algebra can be regarded as a multiplicative Hom-Lie algebra with $\alpha = Id$.  Hom-Lie algebras (without multiplicativity) were introduced in \cite{hls} to describe the structures on some $q$-deformations of the Witt and the Virasoro algebras.  They are also closely related to discrete and deformed vector fields, differential calculus \cite{hls,ls,ls2,rs,ss}, and number theory \cite{larsson}.  Earlier precursors of Hom-Lie algebras can be found in \cite{as,hu,liu}.  The reader is referred to \cite{mak2,ms}, \cite{yau}-\cite{yau12}, and the references therein for discussions about other Hom-type structures.

Recall that the Yang-Baxter equation (YBE) states
\begin{equation}
\label{eq:YBE}
(Id_V \otimes B) \circ (B \otimes Id_V) \circ (Id_V \otimes B) = (B \otimes Id_V) \circ (Id_V \otimes B) \circ (B \otimes Id_V),
\end{equation}
where $V$ is a vector space and $B \colon V^{\otimes 2} \to V^{\otimes 2}$ is a bilinear automorphism.  A solution of the YBE is also called an \emph{$R$-matrix} on $V$.  The YBE was first introduced in the context of statistical mechanics \cite{baxter,baxter2,yang}.  It plays an important role in many topics in mathematical physics, including quantum groups, quantum integrable systems \cite{fad}, braided categories \cite{js1,js2,js3}, the Zamolodchikov tetrahedron equation in higher-dimensional categories \cite{bc,kv}, and invariants of knots and links \cite{tur,yetter}, among others.  Many $R$-matrices are known.  In particular, various classes of quantum groups were introduced precisely for the purpose of constructing $R$-matrices \cite{dri85,dri87,dri89,hay,kassel,lt,majid91,sch}.  Among the many classes of $R$-matrices, a particularly interesting class comes from Lie algebras \cite{bc}.  Thinking of Hom-Lie algebras as $\alpha$-twisted analogues of Lie algebras, this raises the question: What is the $\alpha$-twisted analogue of the YBE that corresponds to Hom-Lie algebras?  A natural answer, given in \cite{yau5}, is the \emph{Hom-Yang-Baxter equation} (HYBE):
\begin{equation}
\label{eq:HYBE}
(\alpha \otimes B) \circ (B \otimes \alpha) \circ (\alpha \otimes B) = (B \otimes \alpha) \circ (\alpha \otimes B) \circ (B \otimes \alpha),
\end{equation}
where $V$ is a vector space, $\alpha \colon V \to V$ is a linear map, and $B \colon V^{\otimes 2} \to V^{\otimes 2}$ is a bilinear (not-necessarily invertible) map that commutes with $\alpha^{\otimes 2}$.  In this case, we say that $B$ is a solution of the HYBE for $(V,\alpha)$.  The YBE can be regarded as the special case of the HYBE in which $\alpha = Id$ and $B$ is invertible.

As in the case of $R$-matrices, solutions of the HYBE can be extended to operators that satisfy the braid relations.  With an additional invertibility condition, we obtain a representation of the braid group.  Indeed, suppose $B$ is a solution of the HYBE for $(V,\alpha)$ \eqref{eq:HYBE}.  For integers $n \geq 3$ and $1 \leq i \leq n-1$, define the operators $B_i \colon V^{\otimes n} \to V^{\otimes n}$ by
\begin{equation}
\label{eq:Bi}
B_i = \begin{cases} B \otimes \alpha^{\otimes (n-2)} & \text{ if $i = 1$},\\
\alpha^{\otimes(i-1)} \otimes B \otimes \alpha^{\otimes (n - i - 1)} & \text{ if $1 < i < n-1$}, \\
\alpha^{\otimes(n-2)} \otimes B & \text{ if $i = n-1$}.\end{cases}
\end{equation}
It is proved in \cite[Theorem 1.4]{yau5} that the maps $B_i$ ($1 \leq i \leq n - 1$) satisfy the braid relations
\begin{equation}
\label{eq:braidrelations}
\begin{split}
B_i B_j &= B_j B_i \text{ if $|i - j| > 1$},\\
B_iB_{i+1}B_i &= B_{i+1}B_iB_{i+1}.
\end{split}
\end{equation}
Moreover, let $\br$ be the braid group on $n$ strands and $\sigma_i \in \br$ be the element representing crossing the $i$th and the $(i+1)$st strands with the former going under the latter.  The elements $\sigma_i$ for $i \in \{1,\ldots,n-1\}$ generate $\br$ and satisfy the defining braid relations \eqref{eq:braidrelations} \cite{artin2,artin}.  If both $\alpha$ and $B$ are invertible, then clearly so are the $B_i$ \eqref{eq:Bi}.  In this case, there is a unique group morphism $\rho^B_n \colon \br \to \Aut(V^{\otimes n})$ satisfying
\[
\rho^B_n(\sigma_i) = B_i.
\]
This generalizes the usual braid group representations associated to $R$-matrices, as discussed, for example, in \cite[X.6.2]{kassel}.

From the discussion above, it is reasonable to say that it is an important task to construct solutions of the HYBE.  Several classes of solutions of the HYBE were constructed in \cite{yau5}.  The purpose of this paper is to construct several more interesting classes of such solutions.

The rest of this paper is organized as follows.  In the next section, we construct solutions of the HYBE from $R$-matrices.  As examples, we apply this result to the $R$-matrices on the simple two-dimensional $U_q(\fsl(2))$-module and their higher dimensional analogues.  In section \ref{sec:HL}, we construct classes of solutions of the HYBE from multiplicative Hom-Lie algebras.  In particular, we compute all the Lie algebra morphisms on the Heisenberg algebra, the $(1+1)$-Poincar\'{e} algebra, and $\fsl(2)$, and describe their associated solutions of the HYBE.  In section \ref{sec:YD}, we construct solutions of the HYBE from morphisms on Yetter-Drinfel'd modules.  In section \ref{sec:Vn}, starting from an arbitrary solution of the HYBE, we construct an infinite sequence of new solutions of the HYBE using the Iwahori map and a result from \cite{yau5}.

\section{Twisting $R$-matrices into solutions of the HYBE}
\label{sec:twist}

Thinking of the HYBE \eqref{eq:HYBE} as an $\alpha$-twisted version of the YBE \eqref{eq:YBE}, it makes sense that one should be able to twist $R$-matrices into solutions of the HYBE.  We can derive this twisting construction of solutions of the HYBE from the following result.

\begin{proposition}
\label{prop:twist}
Let $B$ be a solution of the HYBE \eqref{eq:HYBE} for $(V,\alpha)$, and let $\beta \colon V \to V$ be a linear map such that $\beta^{\otimes 2} \circ B = B \circ \beta^{\otimes 2}$ and $\beta \circ \alpha = \alpha \circ \beta$.  Then
\[
B_\beta  = \beta^{\otimes 2} \circ  B \colon V^{\otimes 2} \to V^{\otimes 2}
\]
is a solution of the HYBE for $(V, \beta \circ \alpha)$.
\end{proposition}

\begin{proof}
Let us write $\gamma = \beta \circ \alpha$ in this proof.  The two assumptions about $\beta$ imply
\[
B_\beta \circ \gamma^{\otimes 2} = \gamma^{\otimes 2} \circ B_\beta.
\]
To check that $B_\beta$ satisfies the HYBE for $(V,\gamma)$, observe that
\[
B_\beta \otimes \gamma = (B \otimes \alpha) \circ \beta^{\otimes 3} = \beta^{\otimes 3} \circ (B \otimes \alpha)
\]
and
\[
\gamma \otimes B_\beta = (\alpha \otimes B) \circ \beta^{\otimes 3} = \beta^{\otimes 3} \circ (\alpha \otimes B).
\]
Therefore, we obtain the desired HYBE,
\[
(\gamma \otimes B_\beta) \circ (B_\beta \otimes \gamma) \circ (\gamma \otimes B_\beta) = (B_\beta \otimes \gamma) \circ (\gamma \otimes B_\beta) \circ (B_\beta \otimes \gamma),
\]
by applying $\beta^{\otimes 3} \circ \beta^{\otimes 3} \circ \beta^{\otimes 3}$ to the HYBE \eqref{eq:HYBE} for $(V,\alpha)$.
\end{proof}

Two special cases of Proposition \ref{prop:twist} follow.  The following result says that each solution of the HYBE gives rise to another solution of the HYBE.

\begin{corollary}
\label{cor1:twist}
Let $B$ be a solution of the HYBE \eqref{eq:HYBE} for $(V,\alpha)$.  Then
\[
B_\alpha = \alpha^{\otimes 2} \circ B
\]
is a solution of the HYBE for $(V,\alpha^2)$.
\end{corollary}

\begin{proof}
This is the special case of Proposition \ref{prop:twist} when $\beta = \alpha$.
\end{proof}

The next result says that solutions of the YBE can be twisted along compatible morphisms to yield solutions of the HYBE.

\begin{corollary}
\label{cor2:twist}
Let $V$ be a vector space, $B \colon V^{\otimes 2} \to V^{\otimes 2}$ be a bilinear (not-necessarily invertible) map that satisfies the YBE \eqref{eq:YBE}, and $\beta \colon V \to V$ be a linear self-map such that
\begin{equation}
\label{eq:compatibility}
\beta^{\otimes 2} \circ B = B \circ \beta^{\otimes 2}.
\end{equation}
Then
\[
B_\beta = \beta^{\otimes 2} \circ B \colon V^{\otimes 2} \to V^{\otimes 2}
\]
is a solution of the HYBE \eqref{eq:HYBE} for $(V,\beta)$.
\end{corollary}

\begin{proof}
This is the special case of Proposition \ref{prop:twist} when $\alpha$ is the identity map on $V$.
\end{proof}

In order to apply Corollary \ref{cor2:twist}, we take some important $R$-matrices $B$ and compute \emph{all} the linear maps $\alpha$ that satisfy the compatibility condition \eqref{eq:compatibility}.  Here we work over the field $\bC$ of complex numbers.

As the first example, consider the (unique up to isomorphism) two-dimensional simple $U_q(\fsl(2))$-module $V_1$.  Here $q \in \bC \smallsetminus\{0,\pm 1\}$ and $U_q(\fsl(2))$ is the quantum enveloping algebra of the Lie algebra $\fsl(2)$ \cite{dri85,dri87,jimbo,kassel,kr,majid}.  We do not need to know the structure of $U_q(\fsl(2))$ for the discussion below.  There is a basis $\{v_0,v_1\}$ of $V_1$ in which $v_0$ is a highest weight vector.  With respect to the basis $\{v_0 \otimes v_0, v_0 \otimes v_1, v_1 \otimes v_0, v_1 \otimes v_1\}$ of $V_1^{\otimes 2}$, the only interesting $U_q(\fsl(2))$-linear $R$-matrices on $V_1$ are
\begin{equation}
\label{eq:Phi}
\Phi_{q,\lambda} = q\lambda\psimatrix
\end{equation}
for any $\lambda \in \bC \smallsetminus\{0\}$.  The only other $U_q(\fsl(2))$-linear $R$-matrices on $V_1$ are 
\begin{enumerate}
\item
the non-zero scalar multiples of $Id_{V_1}^{\otimes 2}$ and 
\item
matrices obtained from $\Phi_{q,\lambda}$ with a change of basis and a switch of $q$ and $q^{-1}$.  
\end{enumerate}
A clear exposition of these $R$-matrices on $V_1$ can be found in \cite[VIII.1]{kassel}.  The following result describes all the linear maps that are compatible with $\Phi_{q,\lambda}$ and their induced solutions of the HYBE.

\begin{theorem}
\label{thm:V1}
With respect to the basis $\{v_0,v_1\}$ of the two-dimensional simple $U_q(\fsl(2))$-module $V_1$, a linear map $\alpha \colon V_1 \to V_1$ satisfies \eqref{eq:compatibility} with $B = \Phi_{q,\lambda}$ \eqref{eq:Phi} if and only if
\begin{equation}
\label{eq:alpha2}
\alpha = \matrixb,\quad \matrixc,\quad\text{or}\quad \matrixad
\end{equation}
for any scalars $a,b,c$, $d \in \bC$.  Their corresponding solutions
\[
\Phi_\alpha = \alpha^{\otimes 2} \circ \Phi_{q,\lambda}
\]
of the HYBE \eqref{eq:HYBE} for $(V_1,\alpha)$ are
\[
\Phib, \quad \Phic, \quad\text{and}\quad \Phiad,
\]
respectively, where $\beta = q^{-1} - q$.  The last type of $\Phi_\alpha$ is invertible if and only if $ad \not= 0$.
\end{theorem}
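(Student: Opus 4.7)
The plan is a direct matrix calculation. Parametrize $\alpha = \left(\begin{smallmatrix}a&b\\c&d\end{smallmatrix}\right)$ in the basis $\{v_0,v_1\}$ and form the Kronecker square $\alpha^{\otimes 2}$ as a $4\times 4$ matrix in the ordered basis of $V_1^{\otimes 2}$. Since $q\lambda \neq 0$, the compatibility \eqref{eq:compatibility} with $B=\Phi_{q,\lambda}$ is equivalent to $\alpha^{\otimes 2}$ commuting with $R=\psimatrix$. Decomposing $R = D + E_{23} + E_{32}$ (diagonal part plus the two off-diagonal entries equal to $1$) and computing $[\alpha^{\otimes 2},R]$ entry by entry reduces the commutation condition to a short system of quadratic equations in $a,b,c,d$.

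Using the hypothesis $q\notin\{0,\pm 1\}$, so that each of $1-q$, $1-q^{-1}$, $q-q^{-1}$ is nonzero, the entries of $[\alpha^{\otimes 2},R]$ collapse to the five identities
\[
ab \;=\; ac \;=\; bc \;=\; bd \;=\; cd \;=\; 0,
\]
with the diagonal product $ad$ left entirely unconstrained. A short case analysis on whether $a$ and $d$ vanish then leaves exactly the three families in \eqref{eq:alpha2}: if some diagonal entry of $\alpha$ is nonzero, the vanishing products force $b=c=0$ and $\alpha$ is diagonal $\matrixad$; if both diagonal entries vanish, the remaining relation $bc=0$ splits the solution into the strict upper form $\matrixb$ or the strict lower form $\matrixc$. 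Conversely, each of the three shapes manifestly satisfies the five relations above, giving the converse direction of the ``if and only if''.

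With the shape of $\alpha$ fixed, the matrices $\Phib$, $\Phic$, and $\Phiad$ are produced by a direct multiplication of $\alpha^{\otimes 2}$ against $\Phi_{q,\lambda}$; the factor $\beta=q^{-1}-q$ in $\Phiad$ descends from the $(2,2)$ entry of $R$. For the invertibility claim, in the diagonal case $\Phi_\alpha$ is block diagonal with blocks of sizes $1,2,1$; the middle $2\times 2$ block has determinant $-(q\lambda\, ad)^2$ and the two outer $1\times 1$ blocks contribute $\lambda a^2$ and $\lambda d^2$, so $\det\Phi_\alpha$ is a nonzero constant multiple of $(ad)^4$, giving invertibility precisely when $ad\neq 0$.

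The main obstacle is the initial bookkeeping of the $16$ entries of $[\alpha^{\otimes 2},R]$ to extract the five quadratic relations above; in particular one must verify that the coupled entries in rows and columns $2$ and $3$ contribute nothing beyond $bc=0$, as the $ad$--terms cancel against each other in those positions. Once those five relations are in hand, the case analysis and the subsequent matrix multiplications for $\Phi_\alpha$ are immediate.
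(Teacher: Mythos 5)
Your proof is correct, and the core mechanism is the same as the paper's: impose $\alpha^{\otimes 2}\circ B = B\circ\alpha^{\otimes 2}$ entrywise and solve the resulting quadratic relations in the entries of $\alpha$. The difference is organizational: the paper does not prove Theorem~\ref{thm:V1} by a standalone $2\times 2$ computation but obtains it as the $N=2$ case of Theorem~\ref{thm:Bqlambda}, evaluating the two compositions on basis vectors $e_i\otimes e_j$ for general $N$ (and using the identification $\Phi_{q,\lambda}=B_{q^{-1},q\lambda}$ after swapping $v_0$ and $v_1$); your direct $4\times 4$ commutator calculation buys a self-contained argument at the cost of not generalizing, while the paper's route gives the higher-dimensional classification for free. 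I verified your five relations $ab=ac=bc=bd=cd=0$ with $ad$ unconstrained (the $(2,2)$ and $(3,3)$ entries of $[\alpha^{\otimes 2},R]$ vanish identically and the $(2,3)$, $(3,2)$ entries reduce to $\beta bc=0$, exactly as you say), and your case analysis, the three product matrices, and the determinant $-q^2\lambda^4(ad)^4$ are all right. One small point: the assertion that $\Phi_\alpha$ is a \emph{solution of the HYBE} (as opposed to merely commuting with $\alpha^{\otimes 2}$) is not established by your matrix work; it requires the twisting principle of Lemma~\ref{lem:twist}, which you use implicitly and should cite explicitly, just as the paper does in its proof of Theorem~\ref{thm:Bqlambda}.
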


Theorem \ref{thm:V1} is actually a special case of the next result, so we will only prove the next result.  There are important higher dimensional versions of the $R$-matrices $\Phi_{q,\lambda}$ \eqref{eq:Phi}.  Suppose $N \geq 2$ and $V$ is an $N$-dimensional vector space with a basis $\{e_1,\ldots,e_N\}$.  Fix non-zero scalars $q \not= \pm 1$ and $\lambda$ in $\bC$.  Consider the bilinear map $B_{q,\lambda} \colon V^{\otimes 2} \to V^{\otimes 2}$ defined by
\begin{equation}
\label{eq:Bqlambda}
B_{q,\lambda}(e_i \otimes e_j) =
\begin{cases}
\lambda qe_i \otimes e_i & \text{ if $i = j$},\\
\lambda e_j \otimes e_i & \text{ if $i < j$},\\
\lambda e_j \otimes e_i + \lambda(q - q^{-1})e_i \otimes e_j & \text{ if $i > j$}.
\end{cases}
\end{equation}
It is known that $B_{q,\lambda}$ is an $R$-matrix on $V$ \cite[VIII.1.4]{kassel}.  The $R$-matrix $B_{q^{-1},q}$ is known as the \emph{Jimbo operator of type} $A^{(1)}_{N-1}$ \cite{jimbo,jimbo1,jimbo2}.  Switching $q$ and $q^{-1}$, the $R$-matrix $B_{q,q^{-1}}$ is used in constructing the quantum exterior algebra \cite[section 3]{jr}.  Moreover, the case $N=2$ contains the $R$-matrix $\Phi_{q,\lambda}$ \eqref{eq:Phi}.  In fact, if we switch the basis vectors $v_0$ and $v_1$ in $V_1$, then
\[
\Phi_{q,\lambda} = B_{q^{-1},q\lambda}.
\]
The general case of the $R$-matrix $B_{q,\lambda}$ is a crucial ingredient in establishing the existence of the polynomial invariant of links known as the \emph{Jones-Conway polynomial} \cite{conway,homfly,jones1,jones2}, as discussed in \cite[X.4 and XII.5]{kassel}.

The following result, which generalizes Theorem ~\ref{thm:V1} to higher dimensions, describes all the linear maps that are compatible with $B_{q,\lambda}$ and their induced solutions of the HYBE.

\begin{theorem}
\label{thm:Bqlambda}
With respect to the basis $\{e_1,\ldots,e_N\}$ of $V$, a linear map $\alpha \colon V \to V$ satisfies \eqref{eq:compatibility} with $B = B_{q,\lambda}$ \eqref{eq:Bqlambda} if and only if $\alpha$ satisfies the following two conditions:
\begin{enumerate}
\item
Each column in the matrix $(a_{ij})$ for $\alpha$ has at most one non-zero entry.
\item
If $1 \leq i < j \leq N$ and $a_{ki}a_{lj} \not= 0$, then $k < l$.
\end{enumerate}
Moreover, suppose that $\alpha$ satisfies these two conditions and that $a_i$ denotes either $0$ or the only non-zero entry $a_{k(i),i}$ in the $i$th column in $(a_{ij})$, if it exists. Then the solution
\[
B_\alpha = \alpha^{\otimes 2} \circ B_{q,\lambda}
\]
of the HYBE for $(V,\alpha)$ \eqref{eq:HYBE} induced by $\alpha$ is given by
\begin{equation}
\label{eq:BalphaN}
B_\alpha(e_i \otimes e_j) =
\begin{cases}
\lambda qa_i^2 \left(e_{k(i)} \otimes e_{k(i)}\right) & \text{ if $i = j$},\\
\lambda a_ia_j \left(e_{k(j)} \otimes e_{k(i)}\right) & \text{ if $i < j$},\\
\lambda a_ia_j\left(e_{k(j)} \otimes e_{k(i)} + (q - q^{-1})e_{k(i)} \otimes e_{k(j)}\right) & \text{ if $i > j$}.
\end{cases}
\end{equation}
\end{theorem}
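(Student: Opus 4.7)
The plan is to work entirely in coordinates. Writing $\alpha(e_i) = \sum_k a_{ki} e_k$, so that $\alpha^{\otimes 2}(e_i \otimes e_j) = \sum_{m,n} a_{mi} a_{nj}\, e_m \otimes e_n$, I would expand the compatibility relation $\alpha^{\otimes 2} \circ B_{q,\lambda} = B_{q,\lambda} \circ \alpha^{\otimes 2}$ on each basis element $e_i \otimes e_j$ and compare the coefficient of $e_p \otimes e_r$ on both sides. The key book-keeping observation is that in $B_{q,\lambda} \circ \alpha^{\otimes 2}(e_i \otimes e_j) = \sum_{m,n} a_{mi} a_{nj}\, B_{q,\lambda}(e_m \otimes e_n)$, only three types of $(m,n)$ feed the coefficient of $e_p \otimes e_r$: (i) $m = n = p = r$, producing the scalar $\lambda q$; (ii) $m = r$, $n = p$ (the swap term in both the $m < n$ and the $m > n$ branches), producing $\lambda$; and (iii) $m = p$, $n = r$ with $p > r$ (the surviving $e_m \otimes e_n$ summand of the $m > n$ branch), producing $\lambda(q - q^{-1})$.

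For necessity I would first take $i = j$. The left-hand side $\alpha^{\otimes 2} \circ B_{q,\lambda}(e_i \otimes e_i) = \lambda q \sum_{m,n} a_{mi} a_{ni}\, e_m \otimes e_n$ yields coefficient $\lambda q\, a_{pi} a_{ri}$ at $e_p \otimes e_r$, while the right-hand side yields $\lambda\, a_{pi} a_{ri}$ when $p < r$ and $\lambda(1 + q - q^{-1})\, a_{pi} a_{ri}$ when $p > r$. Both equations, using $\lambda \neq 0$ and $q \neq \pm 1$, force $a_{pi} a_{ri} = 0$ whenever $p \neq r$, which is exactly condition (1). With (1) in hand I would pass to $i < j$: the diagonal comparison $p = r$ gives $(1 - q) a_{pi} a_{pj} = 0$, and the off-diagonal comparison at $p > r$ gives $\lambda(q - q^{-1}) a_{pi} a_{rj} = 0$. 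Under (1), these translate, whenever $a_i a_j \neq 0$, into $k(i) \neq k(j)$ and $k(i) \not> k(j)$, hence $k(i) < k(j)$, establishing condition (2). The case $i > j$ produces the same information by relabeling.

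For sufficiency, conditions (1) and (2) give $\alpha(e_i) = a_i e_{k(i)}$, hence $\alpha^{\otimes 2}(e_i \otimes e_j) = a_i a_j\, e_{k(i)} \otimes e_{k(j)}$, and condition (2) arranges that whenever $a_i a_j \neq 0$ the indices $k(i)$ and $k(j)$ are ordered as $i$ and $j$ are; thus $B_{q,\lambda}$ evaluates on $e_{k(i)} \otimes e_{k(j)}$ through the same branch as on $e_i \otimes e_j$, and the two sides of the compatibility relation agree case by case. Applying $\alpha^{\otimes 2}$ to each of the three cases of $B_{q,\lambda}(e_i \otimes e_j)$ then yields the stated formula \eqref{eq:BalphaN} for $B_\alpha$. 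The main obstacle is purely organizational: three cases for $(i,j)$ and three cases for $(p,r)$ give nine coefficient equations, and one must keep straight which branch of $B_{q,\lambda}$ contributes which scalar. Once the table is written down, the scalars $\lambda q$, $\lambda$, and $\lambda(q - q^{-1})$ are linearly independent under the hypotheses on $\lambda$ and $q$, and conditions (1) and (2) each fall out of a single equation.
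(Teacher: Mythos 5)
Your proposal is correct and follows essentially the same route as the paper: a coefficient-by-coefficient comparison of $\alpha^{\otimes 2}\circ B_{q,\lambda}$ and $B_{q,\lambda}\circ\alpha^{\otimes 2}$ on the basis vectors $e_i\otimes e_j$, with the case $i=j$ forcing condition (1), the case $i<j$ forcing condition (2), and the observation that under (1) and (2) one has $\alpha(e_i)=a_ie_{k(i)}$ with $k$ order-preserving on the support, which gives both sufficiency and the formula \eqref{eq:BalphaN}. The one point you leave implicit is that calling $B_\alpha$ a \emph{solution of the HYBE} (as opposed to merely computing it) requires the twisting principle of Lemma \ref{lem:twist} together with the known fact that $B_{q,\lambda}$ satisfies the YBE; the paper's proof cites exactly these.
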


For example, when $N = 2$, the linear maps $\alpha$ that satisfy the two conditions in Theorem ~\ref{thm:Bqlambda} are listed in \eqref{eq:alpha2}.  When $N = 3$, the linear maps $\alpha$ that satisfy the two conditions in Theorem ~\ref{thm:Bqlambda} are listed below, where $a_{ij} \in \bC$:
\[
\begin{tabular}{c}
$\alphaone, \alphatwo, \alphathree, \alphafour, \alphafive$, \\
$\alphasix, \alphaeight, \alphanine, \alphaten$.
\end{tabular}
\]
Note also that if $\alpha$ satisfies the two conditions in Theorem ~\ref{thm:Bqlambda}, then $\alpha$ is invertible if and only if its matrix is a diagonal matrix with non-zero diagonal entries.  In this case, the induced solution $B_\alpha$ \eqref{eq:BalphaN} of the HYBE for $(V,\alpha)$ is also invertible, since $B_{q,\lambda}$ \eqref{eq:Bqlambda} is invertible.

\begin{proof}[Proof of Theorem ~\ref{thm:Bqlambda}]
It remains to prove the first part of Theorem ~\ref{thm:Bqlambda}, which is a classification of the linear maps that are compatible (in the sense of ~\eqref{eq:compatibility}) with $B_{q,\lambda}$ ~\eqref{eq:Bqlambda}.  The second part, regarding the induced solutions $B_\alpha = \alpha^{\otimes 2} \circ B_{q,\lambda}$ of the HYBE, follows from the first part, the definition of $B_{q,\lambda}$, and Corollary \ref{cor2:twist}.

To compute the desired maps $\alpha$, first note that
\[
B_{q,\lambda} = \lambda B_{q,1}.
\]
Since $\lambda \not= 0$, it follows that $\alpha^{\otimes 2}$ commutes with $B_{q,\lambda}$ if and only if it commutes with $B_{q,1}$.  Therefore, we only have to consider the special case $B_{q,1}$.

First assume that $\alpha^{\otimes 2}$ commutes with $B_{q,1}$.  We must show that the two conditions in Theorem ~\ref{thm:Bqlambda} hold.  Write $(a_{ij})$ for the matrix representing $\alpha$ with respect to the basis $\{e_1, \ldots , e_N\}$.  For each $i \in \{1, \ldots , N\}$, we have
\begin{equation}
\label{eq:alphaBii}
\begin{split}
(\alpha^{\otimes 2} \circ B_{q,1})(e_i \otimes e_i)
&= q\alpha(e_i) \otimes \alpha(e_i) \\
&= \sum_{k,j} (qa_{ji}a_{ki})e_k \otimes e_j,
\end{split}
\end{equation}
where both $k$ and $j$ run through $1$ to $N$.  On the other hand, we have
\begin{equation}
\label{eq:Balphaii}
\begin{split}
&(B_{q,1} \circ \alpha^{\otimes 2})(e_i \otimes e_i)\\
&= B_{q,1}\Bigl(\sum_{j,k} (a_{ji}a_{ki})e_j \otimes e_k\Bigr)\\
&= \sum_j (qa_{ji}^2)e_j \otimes e_j + \sum_{j<k}(a_{ji}a_{ki})e_k \otimes e_j
 + \sum_{j>k} (a_{ji}a_{ki})(e_k \otimes e_j + (q-q^{-1})e_j\otimes e_k)\\
&= \sum_j (qa_{ji}^2)e_j \otimes e_j + \sum_{k<j} (a_{ji}a_{ki})e_k \otimes e_j
 + \sum_{j<k} \left(a_{ji}a_{ki}(1 + q - q^{-1})\right)e_k \otimes e_j.
\end{split}
\end{equation}
Since $q \not= 1$, we infer from \eqref{eq:alphaBii} and \eqref{eq:Balphaii} that
\[
a_{ji}a_{ki} = 0 \quad \text{for $1 \leq i \leq N$ and $k<j$}.
\]
In other words, each column in the matrix $(a_{ij})$ for $\alpha$ has at most one non-zero entry.  This shows that the first condition in Theorem ~\ref{thm:Bqlambda} is necessary in order for $\alpha^{\otimes 2}$ to commute with $B_{q,1}$.

Now we show that the second condition is also necessary.  So assume that $a_{ki}a_{lj} \not= 0$ for some $i < j$.  We must show that $k < l$.  From the previous paragraph, we have
\[
\alpha(e_i) = a_{ki}e_k\quad\text{and}\quad
\alpha(e_j) = a_{lj}e_l.
\]
Since $i < j$, we have
\begin{equation}
\label{eq:alphaBij}
\begin{split}
(\alpha^{\otimes 2} \circ B_{q,1})(e_i \otimes e_j)
&= \alpha(e_j) \otimes \alpha(e_i) \\
&= (a_{ki}a_{lj})e_l \otimes e_k.
\end{split}
\end{equation}
On the other hand, we have
\begin{equation}
\label{eq:Balphaij}
\begin{split}
(B_{q,1} \circ \alpha^{\otimes 2})(e_i \otimes e_j) &= (a_{ki}a_{lj}) B_{q,1}(e_k \otimes e_l)\\
&= \begin{cases}
(a_{ki}a_{lj})q e_k \otimes e_k & \text{ if $k=l$},\\
(a_{ki}a_{lj})e_l \otimes e_k & \text{ if $k<l$},\\
(a_{ki}a_{lj})\left(e_l\otimes e_k + (q-q^{-1})e_k\otimes e_l\right) & \text{ if $k>l$}.
\end{cases}
\end{split}
\end{equation}
Since $q^2 \not= 1$, we infer from \eqref{eq:alphaBij} and \eqref{eq:Balphaij} that, if $k \geq l$, then $a_{ki}a_{lj} = 0$, which contradicts the assumption $a_{ki}a_{lj} \not= 0$.  Therefore, we must have $k<l$, proving the necessity of the second condition in Theorem ~\ref{thm:Bqlambda}.

We have shown that a linear map $\alpha \colon V \to V$ for which $\alpha^{\otimes 2}$ commutes with $B_{q,1}$ \eqref{eq:Bqlambda} must satisfy the two conditions in Theorem ~\ref{thm:Bqlambda}.  Conversely, if the matrix of $\alpha$ satisfies those two conditions, then a direct computation (most of which is already shown above) shows that $\alpha^{\otimes 2} \circ B_{q,1}$ and $B_{q,1} \circ \alpha^{\otimes 2}$ are equal when applied to $e_i \otimes e_j$ for $i = j$, $i < j$, or $i > j$.  In other words, those two conditions are both necessary and sufficient in order for $\alpha^{\otimes 2}$ to commute with $B_{q,1}$, and hence also with $B_{q,\lambda}$ in general.
\end{proof}

\section{Solutions of the HYBE from Hom-Lie algebras}
\label{sec:HL}

Another way to generate solutions of the HYBE is to generalize the $R$-matrices associated to Lie algebras \cite{bc}.  Suppose $L = (L,[-,-],\alpha)$ is a multiplicative Hom-Lie algebra, as defined in the first paragraph of the Introduction.  Set
\[
L' = \bC \oplus L
\]
with
\[
\alpha(a,x) = (a,\alpha(x))
\]
for $a \in \bC$ and $x \in L$.  It is shown in \cite[Theorem 1.1]{yau5} that there is a solution
\begin{equation}
\label{eq:BalphaL}
B_\alpha((a,x) \otimes (b,y)) = (b,\alpha(y)) \otimes (a,\alpha(x)) + (1,0) \otimes (0,[x,y]).
\end{equation}
of the HYBE \eqref{eq:HYBE} for $(L',\alpha)$.  Moreover, if $\alpha$ is invertible, then so is $B_\alpha$ \eqref{eq:BalphaL} \cite[Corollary 3.3]{yau5}, where
\[
B_\alpha^{-1}((a,x) \otimes (b,y)) = (b,\alpha^{-1}(y)) \otimes (a,\alpha^{-1}(x)) + (0,\alpha^{-2}[x,y]) \otimes (1,0).
\]
The operator $B_\alpha^{-1}$ is a solution of the HYBE for $(L',\alpha^{-1})$ \cite[Proposition 2.6]{yau5}.  Furthermore, in this case we have a braid group representation given by $\sigma_i \mapsto B_i$ \eqref{eq:Bi}, as discussed in the Introduction.  These observations are useful as long as we can produce concrete examples of multiplicative Hom-Lie algebras.

One systematic method for constructing multiplicative Hom-Lie algebras goes like this: Let $(\fg,[-,-])$ be a Lie algebra and $\alpha \colon \fg \to \fg$ be a Lie algebra morphism.  Then
\[
\fg_\alpha = (\fg,[-,-]_\alpha,\alpha)
\]
is a multiplicative Hom-Lie algebra \cite{yau2}, where $[-,-]_\alpha = \alpha \circ [-,-]$.  In fact, the Hom-Jacobi identity \eqref{eq:hom-jacobi} for $[-,-]_\alpha$ is $\alpha^2$ applied to the Jacobi identity of $[-,-]$.  To use this recipe to construct Hom-Lie algebras, one needs to compute Lie algebra morphisms on a given Lie algebra $\fg$.  Later in this section, we compute \emph{all} the Lie algebra morphisms on some particularly important Lie algebras and describe their induced Hom-Lie brackets $[-,-]_\alpha$.  One then obtains concrete solutions $B_\alpha$ \eqref{eq:BalphaL} of the HYBE and, when $\alpha$ is invertible, braid group representations, as discussed in the previous paragraph.

First we consider when two multiplicative Hom-Lie algebras of the form
\[
L_\alpha = (L,[-,-]_\alpha = \alpha \circ [-,-],\alpha),
\]
with $L$ a Lie algebra and $\alpha$ a self Lie algebra morphism, are isomorphic.  Two (multiplicative) Hom-Lie algebras $(L,[-,-],\alpha)$ and $(L',[-,-]',\alpha')$ are said to be \emph{isomorphic} if there is a linear isomorphism $\gamma \colon L \to L'$ such that $\gamma\alpha = \alpha'\gamma$ and $\gamma\circ[-,-] = [-,-]'\circ\gamma^{\otimes 2}$.

\begin{proposition}
\label{prop:HLiso}
Let $\fg$ and $\fh$ be Lie algebras and $\alpha \colon \fg \to \fg$ and $\beta \colon \fh \to \fh$ be Lie algebra morphisms with $\beta$ injective.  Write
\[
\fg_\alpha = (\fg,[-,-]_\alpha = \alpha\circ[-,-],\alpha) \quad\text{and}\quad
\fh_\beta = (\fh,[-,-]_\beta = \beta\circ[-,-],\beta).
\]
Then the following two statements are equivalent:
\begin{enumerate}
\item
The multiplicative Hom-Lie algebras $\fg_\alpha$ and $\fh_\beta$ are isomorphic.
\item
There exists a Lie algebra isomorphism $\gamma \colon \fg \to \fh$ such that $\gamma\alpha = \beta\gamma$.
\end{enumerate}
\end{proposition}

\begin{proof}
As we remarked in the introduction, $\fg_\alpha$ is a multiplicative Hom-Lie algebra whenever $\alpha$ is a Lie algebra morphism on $\fg$, which can be shown by a direct computation \cite{yau2}.  To show that the two statements are equivalent, first suppose that $\fg_\alpha$ and $\fh_\beta$ are isomorphic as multiplicative Hom-Lie algebras.  So there is a linear isomorphism $\gamma \colon \fg \to \fh$ such that $\gamma\alpha = \beta\gamma$ and $[-,-]_\beta\circ \gamma^{\otimes 2} = \gamma \circ [-,-]_\alpha$.   For $x, y \in \fg$, we have
\[
\begin{split}
\beta\gamma[x,y] &= \gamma\alpha[x,y] \\
&= \gamma([x,y]_\alpha) \\
&= [\gamma(x),\gamma(y)]_\beta \\
&= \beta([\gamma(x),\gamma(y)]).
\end{split}
\]
Since $\beta$ is injective, we conclude that
\[
\gamma[x,y] = [\gamma(x),\gamma(y)].
\]
So $\gamma$ is a Lie algebra isomorphism such that $\gamma\alpha = \beta\gamma$.  The converse is proved by essentially the same argument.
\end{proof}

Setting $\fg = \fh$ with $\alpha$ and $\beta$ invertible in Proposition ~\ref{prop:HLiso}, we obtain the following special case.

\begin{corollary}
\label{cor:HLiso}
Let $\alpha, \beta \colon \fg \to \fg$ be Lie algebra automorphisms on a Lie algebra $\fg$.  Then the multiplicative Hom-Lie algebras $\fg_\alpha$ and $\fg_\beta$ are isomorphic if and only if $\alpha$ is conjugate to $\beta$ in the group $\Aut(\fg)$ of Lie algebra automorphisms on $\fg$.
\end{corollary}

Now we consider the three-dimensional Heisenberg algebra $\hei = span_\bC\{X,Y,Z\}$.  Its Lie bracket is determined by
\begin{equation}
\label{eq:heisenberg}
[X,Z] = 0 = [X,Y] \quad\text{and}\quad [Y,Z] = X \quad (\text{the Heisenberg relation}).
\end{equation}
See, e.g., \cite[2.5.8]{hall} or \cite[pp.11-12]{jac}.

\begin{theorem}
\label{thm:H}
Consider the Heisenberg algebra $\hei$ with the basis $\{X,Y,Z\}$ \eqref{eq:heisenberg} and a linear map $\alpha \colon \hei \to \hei$.  Then the following statements hold.
\begin{enumerate}
\item
The map $\alpha$ is a Lie algebra morphism if and only if
\begin{equation}
\label{eq:alphaH}
\alpha = \matrixH
\end{equation}
for some $a_{ij} \in \bC$.  It is invertible if and only if $(a_{22}a_{33} - a_{23}a_{32}) \not= 0$.
\item
Given a Lie algebra morphism $\alpha$ on $\hei$, the Hom-Lie bracket in the corresponding multiplicative Hom-Lie algebra
\[
\hei_\alpha = (\hei,[-,-]_\alpha = \alpha\circ[-,-],\alpha)
\]
is determined by the relations
\[
\begin{split}
[X,Y]_\alpha &= 0 = [X,Z]_\alpha,\\
[Y,Z]_\alpha &= (a_{22}a_{33} - a_{23}a_{32})X \quad (\text{twisted Heisenberg relation}).
\end{split}
\]
\item
There exist uncountably many isomorphism classes of multiplicative Hom-Lie algebras of the form $\hei_\alpha = (\hei,[-,-]_\alpha,\alpha)$.
\end{enumerate}
\end{theorem}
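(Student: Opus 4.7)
The plan is to prove the three parts in order. For part (1), I would write $\alpha$ as a $3\times 3$ matrix with entries $a_{ij}$ (column $j$ giving $\alpha$ applied to the $j$th basis vector) and test the morphism condition $\alpha[-,-] = [\alpha(-),\alpha(-)]$ on the three generating pairs $(X,Y)$, $(X,Z)$, $(Y,Z)$. The key simplification is that, by \eqref{eq:heisenberg}, the bracket of two arbitrary vectors collapses to $[uX+vY+wZ,\,u'X+v'Y+w'Z] = (vw'-wv')X$, so the constraints take a very explicit form. The pairs $(X,Y)$ and $(X,Z)$ yield two quadratic relations among the entries of the first three columns, while the pair $(Y,Z)$ forces
\begin{equation*}
a_{11}X + a_{21}Y + a_{31}Z \;=\; (a_{22}a_{33} - a_{23}a_{32})X.
\end{equation*}
Comparing coefficients of $Y$ and $Z$ immediately gives $a_{21} = a_{31} = 0$, which automatically resolves the two quadratic relations, while the coefficient of $X$ pins down $a_{11} = a_{22}a_{33} - a_{23}a_{32}$; this yields exactly \eqref{eq:alphaH}.

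With $a_{21} = a_{31} = 0$ the matrix is block upper-triangular, so $\det\alpha = a_{11}(a_{22}a_{33} - a_{23}a_{32}) = (a_{22}a_{33} - a_{23}a_{32})^2$, establishing the invertibility claim. Part (2) is then an immediate consequence of $[-,-]_\alpha = \alpha\circ[-,-]$: the two vanishing Heisenberg brackets remain zero, and $[Y,Z]_\alpha = \alpha(X) = a_{11}X = (a_{22}a_{33} - a_{23}a_{32})X$ using the formula for $a_{11}$ just derived.

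Part (3) reduces to a direct manipulation of the Hom-Lie isomorphism condition. If $\beta = \gamma\alpha\gamma^{-1}$ for some $\gamma \in \Aut(\hei)$, then $\gamma\alpha = \beta\gamma$ and the chain $\gamma[x,y]_\alpha = \gamma\alpha[x,y] = \beta\gamma[x,y] = \beta[\gamma(x),\gamma(y)] = [\gamma(x),\gamma(y)]_\beta$ exhibits $\gamma$ as a Hom-Lie isomorphism $\hei_\alpha \to \hei_\beta$. Conversely, given such a Hom-Lie isomorphism $\gamma$, the intertwining $\gamma\alpha = \beta\gamma$ gives the conjugacy relation $\beta = \gamma\alpha\gamma^{-1}$, and the twisted bracket condition rewrites as $\beta\gamma[x,y] = \beta[\gamma(x),\gamma(y)]$. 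The single nonroutine point, and the reason the statement restricts to $\alpha,\beta \in \Aut(\hei)$ rather than general endomorphisms, is precisely that we need invertibility of $\beta$ to cancel it from both sides and recover $\gamma[x,y] = [\gamma(x),\gamma(y)]$, upgrading $\gamma$ from a Hom-Lie isomorphism to an ordinary Lie algebra automorphism of $\hei$.
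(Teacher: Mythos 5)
Your proposal is correct and follows exactly the route the paper intends: the paper omits the computation for Theorem \ref{thm:H}, stating only that it is done by the same method as Theorem \ref{thm:sl2} (testing the morphism condition on the basis pairs $X\otimes Y$, $X\otimes Z$, $Y\otimes Z$ and solving the resulting equations), and your part (3) is precisely the paper's Proposition \ref{prop:HLiso}/Corollary \ref{cor:HLiso} argument, including the key use of injectivity of $\beta$ to cancel it and upgrade $\gamma$ to a Lie algebra automorphism. The details you supply (the collapse of the bracket to $(vw'-wv')X$, the forced vanishing of $a_{21},a_{31}$, and $\det\alpha=(a_{22}a_{33}-a_{23}a_{32})^2$) all check out.
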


We will omit the proof of the first assertion, since we will prove the much more complicated case for $\fsl(2)$ below.  The second assertion follows from the first one.  The last assertion  in Theorem \ref{thm:H} follows from Corollary \ref{cor:HLiso}.  In fact, the following two Lie algebra automorphisms on $\hei$,
\begin{equation}
\label{eq:alphabeta}
\alpha = \alphaHa \quad\text{and}\quad \beta = \alphaHb,
\end{equation}
are not conjugate in $\Aut(\hei)$, provided $a \not= b^{\pm 1}$.  It follows that their induced multiplicative Hom-Lie algebras, $\hei_\alpha$ and $\hei_\beta$, are not isomorphic.

For example, when $\alpha \colon \hei \to \hei$ has the form \eqref{eq:alphaH}, the solution $B_\alpha$ \eqref{eq:BalphaL} of the HYBE for $(\bC \oplus \hei_\alpha, \alpha)$ satisfies
\[
\begin{split}
B_\alpha\left((a,Y) \otimes (b,Z)\right) &= (b, a_{13}X + a_{23}Y + a_{33}Z) \otimes (a, a_{12}X + a_{22}Y + a_{32}Z)\\
&\relphantom{} + (1,0) \otimes (0,(a_{22}a_{33} - a_{23}a_{32}X)).
\end{split}
\]

Our next example is the three-dimensional Lie algebra $\poincare = span_\bC\{X,Y,Z\}$, whose bracket is determined by the relations
\begin{equation}
\label{eq:poincare}
[Y,Z] = 0, \quad [Y,X] = \frac{1}{2}Y, \quad\text{and}\quad [Z,X] = \frac{1}{2}Z.
\end{equation}
The notation $\poincare$ comes from the fact that this Lie algebra is the linear dual of the Lie coalgebra $\fsl(2)$ \cite[Example 8.1.11]{majid}, which plays a role in the theory of Lie bialgebras and the classical YBE.  Another way to look at it is that $\poincare$ is isomorphic to the $(1+1)$-Poincar\'{e} algebra (see, e.g., \cite{hall} (2.5.9) or \cite{jac} pp.12-13), which is the Lie algebra of the Poincar\'{e} group of affine transformations on $\bR^{2}$ preserving the Lorentz distance.

\begin{theorem}
\label{thm:poincare}
Consider the Lie algebra $\poincare = span_\bC\{X,Y,Z\}$ \eqref{eq:poincare} and a linear map $\alpha \colon \poincare \to \poincare$.  Then the following statements hold.
\begin{enumerate}
\item
The map $\alpha$ is a Lie algebra morphism if and only if its matrix takes the form
\[
\alpha_1 = \alphaPa \quad\text{or}\quad \alpha_2 = \alphaPb,
\]
where $a_{ij} \in \bC$ with $a_{11} \not= 1$.  The map $\alpha_1$ is invertible if and only if $(a_{22}a_{33} - a_{23}a_{32}) \not= 0$.
\item
The Hom-Lie bracket $[-,-]_{\alpha_2} = \alpha_2 \circ [-,-]$ on $\poincare_{\alpha_2}$ is identically zero.  The Hom-Lie bracket $[-,-]_{\alpha_1} = \alpha_1 \circ [-,-]$ on $\poincare_{\alpha_1}$ is determined by the relations
\[
\begin{split}
[Y,Z]_{\alpha_1} &= 0, \\
[Y,X]_{\alpha_1} &= \frac{1}{2}(a_{22}Y + a_{32}Z),\\
[Z,X]_{\alpha_1} &= \frac{1}{2}(a_{23}Y + a_{33}Z).
\end{split}
\]
\item
There are uncountably many isomorphism classes of multiplicative Hom-Lie algebras of the form
\[
\poincare_{\alpha_1} = (\poincare,[-,-]_{\alpha_1},\alpha_1),
\]
where $[-,-]_{\alpha_1} = \alpha_1 \circ [-,-]$.
\end{enumerate}
\end{theorem}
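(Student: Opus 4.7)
My plan for part (1) is to write $\alpha$ as a matrix $(a_{ij})$ with respect to the ordered basis $\{X,Y,Z\}$, so that $\alpha(W_j) = \sum_i a_{ij}W_i$, and to impose the morphism condition $\alpha[W_1,W_2] = [\alpha(W_1),\alpha(W_2)]$ on each of the three generating brackets \eqref{eq:poincare}. Expanding $[\alpha(Y),\alpha(X)]$ bilinearly and comparing with $\tfrac{1}{2}\alpha(Y)$, the $X$-coefficient forces $a_{12} = 0$, while the $Y$- and $Z$-coefficients yield $(a_{11}-1)a_{22} = 0$ and $(a_{11}-1)a_{32} = 0$. An identical treatment of $[\alpha(Z),\alpha(X)] = \tfrac{1}{2}\alpha(Z)$ produces $a_{13} = 0$ together with $(a_{11}-1)a_{23} = 0$ and $(a_{11}-1)a_{33} = 0$; once these hold, the remaining equation $[\alpha(Y),\alpha(Z)] = 0$ is automatic. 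A dichotomy is then forced: either $a_{11} = 1$, leaving $a_{22},a_{23},a_{32},a_{33}$ free and producing the shape $\alpha_1$, or $a_{11} \neq 1$, which kills those four entries and produces $\alpha_2$. The first column $(a_{11},a_{21},a_{31})^T$ is unconstrained in either case. Invertibility of $\alpha_1$ follows by expanding its determinant along the first row. The main obstacle I anticipate is the careful bookkeeping of sign conventions in \eqref{eq:poincare} so that this dichotomy emerges cleanly without gaps or overlaps.

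For part (2), I would substitute directly into $[-,-]_\alpha = \alpha \circ [-,-]$. Since the second and third columns of $\alpha_2$ are identically zero, $\alpha_2(Y) = \alpha_2(Z) = 0$, so $\alpha_2(\tfrac{1}{2}Y)$ and $\alpha_2(\tfrac{1}{2}Z)$ both vanish, as does $\alpha_2[Y,Z] = \alpha_2(0)$. For $\alpha_1$ one reads off $\alpha_1(Y) = a_{22}Y + a_{32}Z$ and $\alpha_1(Z) = a_{23}Y + a_{33}Z$ from the second and third columns, and halves to obtain the stated bracket relations.

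For part (3), the strategy is to exhibit an explicit uncountable family whose members are pairwise non-isomorphic, distinguished by the linear-algebraic data of $\alpha_1$. Let $\alpha_1^{(t)}$ be the diagonal map $\operatorname{diag}(1,t,1)$ for $t \in \bC$; each is of the shape $\alpha_1$ and hence a Lie algebra morphism by part (1). If $\gamma \colon \poincare_{\alpha_1^{(t)}} \to \poincare_{\alpha_1^{(t')}}$ is a Hom-Lie isomorphism, the intertwining relation $\gamma\,\alpha_1^{(t)} = \alpha_1^{(t')}\,\gamma$ forces $\alpha_1^{(t)}$ and $\alpha_1^{(t')}$ to be similar as linear endomorphisms of $\poincare$, so their eigenvalue multisets $\{1,1,t\}$ and $\{1,1,t'\}$ coincide, forcing $t = t'$. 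The family $\{\poincare_{\alpha_1^{(t)}}\}_{t \in \bC}$ therefore provides a continuum of pairwise non-isomorphic Hom-Lie algebras of the required form.
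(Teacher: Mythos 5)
Your proposal is correct and follows essentially the route the paper intends: the paper explicitly omits the proof of this theorem, stating that it is obtained by the same method as Theorem~\ref{thm:sl2}, namely writing $\alpha$ as a matrix, imposing $\alpha[{-},{-}]=[\alpha({-}),\alpha({-})]$ on the three generating brackets, and solving the resulting equations --- which is exactly your computation, and your dichotomy on $a_{11}$ reproduces the two families $\alpha_1$, $\alpha_2$. For part (3) the paper would invoke Corollary~\ref{cor:HLiso} (conjugacy classes in $\Aut(\poincare)$), whereas you use only the easy necessary direction of that statement (the intertwining relation $\gamma\alpha=\alpha'\gamma$ forces similarity) together with the eigenvalue multiset of $\mathrm{diag}(1,t,1)$; this is a valid, slightly more self-contained variant of the same idea.
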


Again we omit the proof of Theorem \ref{thm:poincare}, since it is much easier than the case of $\fsl(2)$.  For examples, the solution $B_{\alpha_1}$ \eqref{eq:BalphaL} of the HYBE for $(\bC \oplus \poincare_{\alpha_1}, \alpha_1)$ satisfies
\[
\begin{split}
B_{\alpha_1}\left((a,Y) \otimes (b,X)\right) 
&= (b, X + a_{21}Y + a_{31}Z) \otimes (a, a_{22}Y + a_{32}Z)\\
&\relphantom{} + (1,0) \otimes \left(0,\frac{1}{2}(a_{22}Y + a_{32}Z)\right).
\end{split}
\]

Our last example of multiplicative Hom-Lie algebras comes from the Lie algebra $\fsl(2) = span_\bC\{X,Y,Z\}$, whose bracket is determined by the relations
\begin{equation}
\label{eq:sl2}
[X,Y] = 2Y,\quad [X,Z] = -2Z,\quad\text{and}\quad [Y,Z] = X.
\end{equation}
It is called the \emph{split three-dimensional simple Lie algebra} in \cite[p.14]{jac}. The Lie algebra $\fsl(2)$ is a crucial example in the structure theory of semisimple Lie algebras, as discussed, for example, in \cite{hum,jac}.  Lie algebra morphisms on $\fsl(2)$ are more complicated than those on the Heisenberg algebra and $\poincare$.

\begin{theorem}
\label{thm:sl2}
The following is a complete list of Lie algebra morphisms $\alpha$ on $\fsl(2)$ with respect to the basis $\{X,Y,Z\}$ \eqref{eq:sl2}.  The induced Hom-Lie bracket $[-,-]_\alpha = \alpha \circ [-,-]$ on the Hom-Lie algebra
\[
\fsl(2)_\alpha = (\fsl(2),[-,-]_\alpha,\alpha)
\]
is also stated in each case.
\begin{enumerate}
\item
$\alpha = 0$, $[-,-]_\alpha = 0$.
\item
\begin{equation}
\label{eq:alpha1sl2}
\alpha_1 = \alphaSa,
\end{equation}
with $a,b,c \in \bC$, $b \not= 0$, and $ac = 0$.
\[
\begin{split}
[X,Y]_{\alpha_1} &= 2(cX + bY - b^{-1}c^2Z), \\
[X,Z]_{\alpha_1} &= -2(aX - a^2bY + b^{-1}Z),\\
[Y,Z]_{\alpha_1} &= X - 2abY - 2b^{-1}cZ.
\end{split}
\]
\item
\begin{equation}
\label{eq:alpha2sl2}
\alpha_2 = \alphaSb,
\end{equation}
with $a,b,c \in \bC$, $b \not= 0$, and $ac = 0$.
\[
\begin{split}
[X,Y]_{\alpha_2} &= 2(cX - b^{-1}c^2Y + bZ), \\
[X,Z]_{\alpha_2} &= -2(aX + b^{-1}Y - a^2bZ),\\
[Y,Z]_{\alpha_2} &= -X + 2b^{-1}cY + 2abZ.
\end{split}
\]
\item
\begin{equation}
\label{eq:alpha3sl2}
\alpha_3 = \alphaSc,
\end{equation}
with $a,b,c \in \bC$, $ab \not= 0$, and $c \not= \pm 1$.
\begin{subequations}
\allowdisplaybreaks
\begin{align*}
[X,Y]_{\alpha_3} &= 2a\left(X + \frac{b}{c-1}Y + \frac{1-c}{b}Z\right),\\
[X,Z]_{\alpha_3} &= \frac{c-1}{2a}\left((c+1)X + bY - \frac{(c+1)^2}{b}Z\right),\\
[Y,Z]_{\alpha_3} &= cX + bY + \frac{1-c^2}{b}Z.
\end{align*}
\end{subequations}
\end{enumerate}
Moreover, $\alpha_1$, $\alpha_2$, and $\alpha_3$ are all invertible.  There are uncountably many isomorphism classes of multiplicative Hom-Lie algebras of the form $sl(2)_{\alpha_*}$.
\end{theorem}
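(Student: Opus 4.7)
The plan is structural. Since $sl(2)$ is simple, the kernel of any Lie algebra morphism $\alpha \colon sl(2) \to sl(2)$ is a Lie ideal, hence is either $0$ or all of $sl(2)$. The latter gives case~(1) ($\alpha = 0$); in the former case $\alpha$ is injective, hence an automorphism by dimension. It is classical that $\Aut(sl(2,\bC))$ consists of the inner automorphisms $\text{Ad}(g)(x) = gxg^{-1}$ parametrized by $g = \bigl(\begin{smallmatrix} p & q \\ r & s \end{smallmatrix}\bigr) \in SL(2,\bC)$, with $g$ and $-g$ inducing the same automorphism.

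Realize the basis as $X = E_{11} - E_{22}$, $Y = E_{12}$, $Z = E_{21}$ in $sl(2,\bC)$. A direct conjugation yields the matrix
\[
M(g) = \begin{pmatrix} ps + qr & -pr & qs \\ -2pq & p^2 & -q^2 \\ 2rs & -r^2 & s^2 \end{pmatrix}
\]
of $\text{Ad}(g)$ in the basis $\{X, Y, Z\}$. The $(1,1)$-entry equals $ps + qr = 1 + 2qr$, which suggests a three-way split: (a)~$qr = 0$, (b)~$ps = 0$ (equivalently $qr = -1$), and (c)~$pqrs \neq 0$. In case~(a), substituting $q = 0$ or $r = 0$ into $M(g)$ and setting $b = p^2$, $a = qs$, $c = -pr$ identifies $M(g)$ with $\alpha_1$; here $b \neq 0$ is automatic, and the constraint $ac = 0$ comes precisely from the dichotomy $q = 0$ versus $r = 0$. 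Case~(b) produces $\alpha_2$ by the same mechanism. In case~(c), setting $c = ps + qr$, $a = -pr$, $b = -2pq$ and using $ps = (c+1)/2$, $qr = (c-1)/2$ (which follow from $ps - qr = 1$ and $ps + qr = c$) to eliminate $p, q, r, s$ from the remaining entries of $M(g)$ identifies it with $\alpha_3$, while $ab \neq 0$ and $c \neq \pm 1$ translate exactly to $pqrs \neq 0$. Conversely, every admissible triple $(a,b,c)$ arises from a unique $\pm g \in SL(2,\bC)$, so the three families partition $\Aut(sl(2,\bC))$.

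The induced brackets $[-,-]_{\alpha_i} = \alpha_i \circ [-,-]$ are then obtained by applying each $\alpha_i$ to $[X,Y] = 2Y$, $[X,Z] = -2Z$, $[Y,Z] = X$, a direct matrix-vector computation. Invertibility of $\alpha_1, \alpha_2, \alpha_3$ is automatic since each is an inner automorphism (alternatively, a short determinant check using $ac = 0$ gives $\det \alpha_i = 1$). Finally, the general form of Theorem~\ref{thm:H}(3), recorded as Corollary~\ref{cor:HLiso}, asserts that for invertible $\alpha, \beta$ on any Lie algebra $\fg$, the Hom-Lie algebras $\fg_\alpha$ and $\fg_\beta$ are isomorphic if and only if $\alpha$ and $\beta$ are conjugate in $\Aut(\fg)$. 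Since $\Aut(sl(2,\bC)) \cong PSL(2,\bC)$ has uncountably many conjugacy classes --- diagonalizable elements alone are distinguished by $\text{tr}(g)^2 \in \bC$ --- the family $\{sl(2)_{\alpha_*}\}$ contains uncountably many isomorphism classes. I expect the main obstacle to be the bookkeeping in the three-way case split, making sure each matrix identification and side condition ($ac = 0$ for $\alpha_1, \alpha_2$; $c \neq \pm 1$ and $ab \neq 0$ for $\alpha_3$) falls out correctly and that the three families together cover all nonzero Lie algebra morphisms on $sl(2)$ without overlap.
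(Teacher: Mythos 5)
Your proof is correct, but it takes a genuinely different route from the paper. The paper proves the classification by brute force: it writes $\alpha = (a_{ij})$, expands $\alpha[u,v] = [\alpha(u),\alpha(v)]$ on the three basis brackets of $sl(2)$ to get nine quadratic equations in the $a_{ij}$, and solves them by a four-way case analysis on whether $a_{12}$ and $a_{13}$ vanish (Case IV, where both are nonzero, is a lengthy elimination producing $\alpha_3$). You instead exploit simplicity of $sl(2)$ to reduce to the nonzero case, where $\alpha$ must be an automorphism, and then invoke the classical fact that $\Aut(sl(2,\bC))$ is the image of $\mathrm{Ad} \colon SL(2,\bC) \to \Aut(sl(2,\bC))$ with kernel $\{\pm I\}$; your matrix $M(g)$ and the three identifications (via $qr=0$, $ps=0$, $pqrs \neq 0$, which do partition $SL(2,\bC)$ since $ps - qr = 1$) check out, including the translation of the side conditions $ac=0$ (which equals $-(ps)(qr)$ up to the normalization) and $ab \neq 0$, $c \neq \pm 1$. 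What your approach buys is a conceptual explanation of the answer --- in particular it makes the invertibility and $\det = 1$ claims, and the count of isomorphism classes via conjugacy classes of $PSL(2,\bC)$ through Corollary 3.2, essentially free --- at the cost of importing the classification of automorphisms of $sl(2,\bC)$ as inner. The paper's computation is longer but entirely self-contained, and it is the same elementary method the author applies to the Heisenberg algebra and to $\poincare$, where no such structural shortcut is available because those Lie algebras are not simple. One small presentational caveat: if you cite ``$\Aut(sl(2,\bC))$ is inner'' you should give a reference or a short argument (e.g., via conjugacy of Cartan subalgebras and the rank-one root system), since that fact is carrying the entire weight of the surjectivity of your parametrization.
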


In fact, not only are $\alpha_1$, $\alpha_2$, and $\alpha_3$ invertible, but also their matrices all have determinant equal to $1$.  In \cite[Example 6.1]{yau5}, the special case of $\alpha_1$ with $a = c = 0$ (i.e., when $\alpha_1$ is diagonal) was discussed, together with its induced solution of the HYBE \eqref{eq:BalphaL} for $(\bC \oplus sl(2)_{\alpha_1},\alpha_1)$.

Assume for the moment that the Lie algebra morphisms on $\fsl(2)$ have been classified as in Theorem ~\ref{thm:sl2}.  One computes directly that $\alpha_i$ has determinant $1$ for $i = 1, 2, 3$, and so they are all invertible.  It then follows from Corollary ~\ref{cor:HLiso} that there are uncountably many isomorphism classes of multiplicative Hom-Lie algebras of the form
\[
\fsl(2)_{\alpha_*} = (\fsl(2),\alpha_* \circ [-,-],\alpha_*).
\]
Indeed, using the classification of the Lie algebra morphisms on $\fsl(2)$, one can see that the automorphisms on $\fsl(2)$ represented by the matrices in \eqref{eq:alphabeta} with $a \not= b^{\pm 1}$ are not conjugate in $\Aut(\fsl(2))$.

For example, the solution $B_{\alpha_1}$ \eqref{eq:BalphaL} of the HYBE for $(\bC \oplus \fsl(2)_{\alpha_1}, \alpha_1)$ satisfies
\[
\begin{split}
B_{\alpha_1}\left((a_0,X) \otimes (b_0,Y)\right) 
&= (b_0, cX + bY - b^{-1}c^2Z) \otimes (a_0, X - 2abY - 2b^{-1}cZ)\\
&\relphantom{} + (1,0) \otimes \left(0,2(cX + bY - b^{-1}c^2Z)\right).
\end{split}
\]

\begin{proof}[Proof of Theorem ~\ref{thm:sl2}]
First note that the Hom-Lie bracket $[-,-]_\alpha = \alpha\circ[-,-]$ in each case is computed by simply applying $\alpha$ to the Lie bracket in $\fsl(2)$ \eqref{eq:sl2}.  Therefore, it remains to classify the Lie algebra morphisms on $\fsl(2)$.

Let $\alpha \colon \fsl(2) \to \fsl(2)$ be a linear map represented by the $3 \times 3$ matrix $(a_{ij})$ with respect to the basis $\{X,Y,Z\}$ of $\fsl(2)$ \eqref{eq:sl2}.  Then $\alpha$ is a Lie algebra morphism if and only if
\[
\alpha([-,-]) = [\alpha(-),\alpha(-)]
\]
for the nine basis elements of $\fsl(2)^{\otimes 2}$.  By the skew-symmetry of the Lie bracket, we only need to check this equality for the three basis elements $X \otimes Y$, $X \otimes Z$, and $Y \otimes Z$.  The equation
\[
\alpha([X,Y]) = [\alpha(X),\alpha(Y)]
\]
and the defining properties \eqref{eq:sl2} give rise to three equations in the $a_{ij}$, one equation from each of the coefficients of $X$, $Y$, and $Z$.  Likewise, using instead the basis elements $X \otimes Z$ and $Y \otimes Z$, we obtain six more equations in the $a_{ij}$.  A simple calculation shows that these nine equations are:
\begin{subequations}
\allowdisplaybreaks
\begin{align}
a_{21}a_{32} - a_{22}a_{31} &= 2a_{12} \label{sl21}\\
a_{12}a_{21} &= a_{22}(a_{11} - 1) \label{sl22}\\
a_{12}a_{31} &= a_{32}(1 + a_{11}) \label{sl23}\\
a_{21}a_{33} - a_{23}a_{31} &= -2a_{13} \label{sl24}\\
a_{13}a_{21} &= a_{23}(1 + a_{11}) \label{sl25}\\
a_{13}a_{31} &= a_{33}(a_{11} - 1) \label{sl26}\\
a_{11} &= a_{22}a_{33} - a_{23}a_{32} \label{sl27}\\
a_{21} &= 2(a_{12}a_{23} - a_{13}a_{22}) \label{sl28}\\
a_{31} &= 2(a_{13}a_{32} - a_{12}a_{33}) \label{sl29}
\end{align}
\end{subequations}
The first three equations above are from $\alpha([X,Y]) = [\alpha(X),\alpha(Y)]$.  The next three equations are from $\alpha([X,Z]) = [\alpha(X),\alpha(Z)]$.  The last three equations are from $\alpha([Y,Z]) = [\alpha(Y),\alpha(Z)]$.

In the rest of this proof, we show that the solutions to these nine simultaneous equations are exactly the four types of Lie algebra morphisms $\alpha$ stated in Theorem ~\ref{thm:sl2}.  To solve the above simultaneous equations, we consider four cases, depending on whether $a_{12}$ and $a_{13}$ are zero or not.

\textbf{Case I:} $a_{12} = a_{13} = 0$.  We obtain immediately from \eqref{sl28} and \eqref{sl29} that $a_{21} = a_{31} = 0$, which implies $a_{23}a_{33} = 0$ by \eqref{sl25} and \eqref{sl26}.  There are three sub-cases:
\begin{enumerate}
\item
If $a_{23} = 0$ and $a_{33} \not= 0$, then we obtain $a_{11} = 1$ from \eqref{sl26}, $a_{22}a_{33} = 1$ from \eqref{sl27}, and $a_{32} = 0$ from \eqref{sl23}.  So we have
\begin{equation}
\label{eq:sl21}
\alpha = \alphaonea
\end{equation}
with $a_{22} \not= 0$.
\item
If $a_{23} \not= 0$ and $a_{33} = 0$, then we obtain $a_{11} = -1$ from \eqref{sl25}, $a_{23}a_{32} = 1$ from \eqref{sl27}, and $a_{22} = 0$ from \eqref{sl22}.  So we have
\begin{equation}
\label{eq:sl22}
\alpha = \alphaoneb
\end{equation}
with $a_{32} \not= 0$.
\item
If $a_{23} = a_{33} = 0$, then $a_{11} = 0$ from \eqref{sl27}, $a_{32} = 0$ from \eqref{sl23}, and $a_{22} = 0$ from \eqref{sl22}.  So $\alpha = 0$.
\end{enumerate}

\textbf{Case II:} $a_{12} = 0$ and $a_{13} \not= 0$.  We obtain immediately from \eqref{sl22} and \eqref{sl23} that $a_{22}a_{32} = 0$.  We divide this into three sub-cases as in Case I.
\begin{enumerate}
\item
If $a_{22} = 0$ and $a_{32} \not= 0$, then \eqref{sl28} implies $a_{21} = 0$ and \eqref{sl29} implies $a_{31} = 2a_{13}a_{32}$.  Also, $a_{12} = 0$ and $a_{32} \not= 0$ imply by \eqref{sl23} that $a_{11} = -1$.  Now since $a_{21} = 0$, \eqref{sl24} implies $2a_{13} = a_{23}a_{31} = a_{23}(2a_{13}a_{32})$.  Since $a_{13} \not= 0$, we obtain $a_{23}a_{32} = 1$.  Finally, it follows from \eqref{sl26} that $-2a_{33} = a_{13}a_{31} = a_{13}(2a_{13}a_{32})$, so $a_{33} = -a_{13}^2a_{32}$.  In summary, we have
\begin{equation}
\label{eq:sl23}
\alpha = \alphatwoa
\end{equation}
with $a_{32}, a_{13} \not= 0$.
\item
If $a_{22} \not= 0$ and $a_{32} = 0$, then \eqref{sl29} implies $a_{31} = 0$, \eqref{sl28} implies $a_{21} = -2a_{13}a_{22}$, and \eqref{sl22} implies $a_{11} = 1$.  Plugging $a_{11} = 1$ into \eqref{sl27}, we obtain $a_{33} = a_{22}^{-1}$.  Finally, by \eqref{sl25} we have $a_{23} = a_{13}a_{21}/2 = -a_{13}^2a_{22}$.  Therefore, we have
\begin{equation}
\label{eq:sl24}
\alpha = \alphatwob
\end{equation}
with $a_{22}, a_{13} \not= 0$.
\item
Now suppose $a_{22} = a_{32} = 0$.  We show that this sub-case cannot happen.  First we claim that $a_{11} = a_{21} = a_{31} = 0$. Indeed, since $a_{12} = 0 = a_{32}$, \eqref{sl29} implies $a_{31} = 0$.  Likewise, $a_{12} = 0 = a_{22}$ and \eqref{sl28} imply $a_{21} = 0$.  Using \eqref{sl27} we also have $a_{11} = 0$.

Next we claim that $a_{33} = a_{23} = 0$.  Indeed, using $a_{31} = 0 = a_{11}$ and \eqref{sl26}, we obtain $a_{33} = 0$.  Likewise, $a_{21} = 0 = a_{11}$ and \eqref{sl25} imply $a_{23} = 0$.  Therefore, it follows from \eqref{sl24} that $-2a_{13} = 0$, contradicting the assumption $a_{13} \not= 0$.  Thus, this sub-case cannot happen.
\end{enumerate}

\textbf{Case III:} $a_{12} \not= 0$ and $a_{13} = 0$.  We obtain from \eqref{sl25} and \eqref{sl26} that $a_{23}a_{33} = 0$.  Dividing this into three sub-cases as in Case II and performing a similar calculation, we obtain
\begin{equation}
\label{eq:sl25}
\alpha = \alphathreea
\end{equation}
with $a_{12}, a_{22} \not= 0$, or
\begin{equation}
\label{eq:sl26}
\alpha = \alphathreeb
\end{equation}
with $a_{12}, a_{32} \not= 0$.

Notice that the matrices \eqref{eq:sl21}, \eqref{eq:sl24}, and \eqref{eq:sl25} can be stated together as the matrix $\alpha_1$ \eqref{eq:alpha1sl2} in the statement of Theorem ~\ref{thm:sl2}.  Likewise, the matrices \eqref{eq:sl22}, \eqref{eq:sl23}, and \eqref{eq:sl26} can be stated together as the matrix $\alpha_2$ \eqref{eq:alpha2sl2}.

\textbf{Case IV:} $a_{12} \not= 0$ and $a_{13} \not= 0$.  This is the longest of the four cases.  First we claim that
\[
a_{11} \not= \pm 1.
\]
Indeed, if $a_{11} = -1$, then we have $a_{31} = 0$ from \eqref{sl23} and $a_{21} = 0$ from \eqref{sl25}.  This implies that $2a_{12} = 0$ by \eqref{sl21}.  This contradicts the assumption $a_{12} \not = 0$, so $a_{11} \not= -1$.  On the other hand, if $a_{11} = 1$, then $a_{21} = 0$ by \eqref{sl22}.  So \eqref{sl25} implies that $a_{23} = 0$.  But then \eqref{sl24} implies that $a_{13} = 0$, contradicting the assumption $a_{13} \not= 0$.  We have shown that, if $a_{12} \not= 0$ and $a_{13} \not= 0$, then $a_{11} \not= \pm 1$.

Next we claim that
\[
a_{21}a_{22} \not= 0.
\]
Indeed, it follows from \eqref{sl22}, $a_{12} \not= 0$, and $a_{11} \not= 1$ that either $a_{21} = a_{22} = 0$ or both $a_{21}$ and $a_{22}$ are non-zero.  But the former implies by \eqref{sl21} that $2a_{12} = 0$, which is a contradiction.  Thus we must have $a_{21} \not= 0$ and $a_{22} \not= 0$.

The rest of this case is about expressing all the $a_{ij}$ in terms of $a_{11} \not= \pm 1$, $a_{12} \not= 0$, and $a_{21} \not= 0$.  This is a tedious but conceptually elementary calculation.  We begin with $a_{22}$ and $a_{23}$.  First, \eqref{sl22} implies
\begin{equation}
\label{a22}
a_{22} = \frac{a_{12}a_{21}}{a_{11} - 1}.
\end{equation}
Using \eqref{sl22} and \eqref{sl25}, we have
\[
a_{21} = \frac{a_{22}(a_{11} - 1)}{a_{12}} = \frac{a_{23}(a_{11} + 1)}{a_{13}},
\]
from which we obtain
\[
a_{13}a_{22} = a_{12}a_{23} \cdot \frac{a_{11} + 1}{a_{11} - 1}.
\]
Plugging this into \eqref{sl28}, we obtain
\begin{equation}
\label{a21A}
a_{21} = 2a_{12}a_{23}\left(1 - \frac{a_{11} + 1}{a_{11} - 1}\right) = \frac{4a_{12}a_{23}}{1 - a_{11}}.
\end{equation}
Solving for $a_{23}$, we obtain
\begin{equation}
\label{a23}
a_{23} = \frac{a_{21}(1 - a_{11})}{4a_{12}}.
\end{equation}
Also, we infer that $a_{23} \not= 0$, since $a_{21} \not= 0$ and $a_{11} \not= 1$.

Next we consider $a_{13}$ and $a_{31}$.  Using \eqref{sl25} and \eqref{a21A}, we obtain
\begin{equation}
\label{a13}
a_{13} = \frac{a_{23}(1 + a_{11})}{a_{21}} = \frac{1 - a_{11}^2}{4a_{12}}.
\end{equation}
Using \eqref{sl23} and \eqref{a13}, we obtain
\begin{equation}
\label{a31A}
a_{31} = \frac{a_{32}(1 + a_{11})}{a_{12}} = \frac{4a_{13}a_{32}(1 + a_{11})}{1 - a_{11}^2} = \frac{4a_{13}a_{32}}{1 - a_{11}}.
\end{equation}
Plugging the expressions \eqref{a21A} of $a_{21}$, \eqref{a31A} of $a_{31}$, and \eqref{a13} of $a_{13}$ into \eqref{sl24}, we obtain
\[
\frac{4a_{23}(a_{12}a_{33} - a_{13}a_{32})}{1 - a_{11}} = -2a_{13} = \frac{a_{11}^2 - 1}{2a_{12}}.
\]
Combining this with \eqref{sl29} and \eqref{a21A}, we finally obtain
\begin{equation}
\label{a31B}
a_{31} = 2(a_{13}a_{32} - a_{12}a_{33}) = \frac{a_{11}^2 - 1}{2a_{12}} \cdot \frac{1 - a_{11}}{-2a_{23}} = \frac{1 - a_{11}^2}{a_{21}}.
\end{equation}

Now we consider $a_{32}$.  Using again the expressions \eqref{a21A} of $a_{21}$ and \eqref{a31A} of $a_{31}$, it follows from \eqref{sl21} that
\[
2a_{12} = a_{21}a_{32} - a_{22}a_{31} = \frac{4a_{32}(a_{12}a_{23} - a_{13}a_{22})}{1 - a_{11}}.
\]
Solving for $a_{32}$ in the above equation and using \eqref{sl28}, we obtain
\begin{equation}
\label{a32}
a_{32} = \frac{a_{12}(1 - a_{11})}{2(a_{12}a_{23} - a_{13}a_{22})} = \frac{a_{12}(1 - a_{11})}{a_{21}}.
\end{equation}

Finally, we consider $a_{33}$.  Using \eqref{sl26} and \eqref{a13}, we obtain
\begin{equation}
\label{a31C}
a_{31} = \frac{a_{33}(a_{11} - 1)}{a_{13}} = \frac{4a_{12}a_{33}(a_{11} - 1)}{1 - a_{11}^2} = \frac{-4a_{12}a_{33}}{1 + a_{11}}.
\end{equation}
Solving for $a_{33}$ in \eqref{a31C} and using \eqref{a31B}, we obtain
\begin{equation}
\label{a33}
a_{33} = \frac{a_{31}(1 + a_{11})}{-4a_{12}} = \frac{(a_{11}^2 - 1)(a_{11} + 1)}{4a_{12}a_{21}}.
\end{equation}

Now observe that \eqref{a22}, \eqref{a23}, \eqref{a13}, \eqref{a31B}, \eqref{a32}, and \eqref{a33} give exactly the matrix \eqref{eq:alpha3sl2}.  This finishes Case IV and the proof of Theorem ~\ref{thm:sl2}.
\end{proof}

\section{Solutions of the HYBE from Yetter-Drinfel'd modules}
\label{sec:YD}

The results in this section are about Yetter-Drinfel'd modules and are valid over any field $\bk$ of characteristic $0$.  We will produce a family of solutions of the HYBE from each Yetter-Drinfel'd module.  A \emph{Yetter-Drinfel'd module} $V$ over a bialgebra $H$ \cite{rt,yetter2} consists of
\begin{enumerate}
\item
a left $H$-module structure on $V$, written as $x \cdot v$ for $x \in H$ and $v \in V$, and
\item
a left $H$-comodule structure on $V$, written as $\rho(v) = \sum v_{-1} \otimes v_{0}$ for $v \in V$,
\end{enumerate}
such that the Yetter-Drinfel'd condition
\begin{equation}
\label{eq:YD}
\sum x_1 v_{-1} \otimes x_2 \cdot v_0 = \sum (x_1 \cdot v)_{-1} x_2 \otimes (x_1 \cdot v)_0
\end{equation}
holds for all $x \in H$ and $v \in V$.  Here, and in what follows,
\[
\Delta(x) = \sum x_1 \otimes x_2 \in H^{\otimes 2}
\]
is Sweedler's notation \cite{sweedler} for comultiplication. The coassociativity of the left $H$-comodule structure $\rho$ is the equality
\[
(Id_H \otimes \rho) \circ \rho = (\Delta \otimes Id_V) \circ \rho.
\]
For $u \in V$, we write
\[
(Id_H \otimes \rho)(\rho(u))
= \sum u_{-2} \otimes u_{-1} \otimes u_0
= (\Delta \otimes Id_V)(\rho(u)).
\]
In the special case that $H$ is a finite dimensional Hopf algebra, the category of Yetter-Drinfel'd modules over $H$ is equivalent to the category of left modules over the Drinfel'd  double of $H$ \cite{dri87,majid91a}.

Using the Yetter-Drinfel'd condition \eqref{eq:YD}, a direct computation shows that each Yetter-Drinfel'd module $V$ over a bialgebra $H$ has an associated $R$-matrix \cite{lr,rad} given by
\begin{equation}
\label{YDB}
B(v \otimes w) = \sum v_{-1} \cdot w \otimes v_0
\end{equation}
for $v, w \in V$.  The following result shows that this operator $B$ is also a solution of the HYBE for $(V,\alpha)$, provided that $\alpha$ is compatible with the $H$-(co)module structures.

\begin{theorem}
\label{thm:YD}
Let $V$ be a Yetter-Drinfel'd module over a bialgebra $H$, and let $\alpha \colon V \to V$ be a linear map that is both a morphism of $H$-modules and a morphism of $H$-comodules.  Then $B$ in \eqref{YDB} is a solution of the HYBE \eqref{eq:HYBE} for $(V,\alpha)$.
\end{theorem}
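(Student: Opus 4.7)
The plan is to verify both the compatibility condition $B \circ \alpha^{\otimes 2} = \alpha^{\otimes 2} \circ B$ and the HYBE \eqref{eq:HYBE} itself by direct Sweedler-notation computation, essentially mimicking the standard proof that the Yetter-Drinfel'd $R$-matrix satisfies the YBE but sprinkling $\alpha$'s through it using the two morphism hypotheses on $\alpha$.

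First I would check compatibility. Applied to $v \otimes w$, the module morphism property yields
\[
\alpha^{\otimes 2}(B(v \otimes w)) = \sum \alpha(v_{-1} \cdot w) \otimes \alpha(v_0) = \sum v_{-1} \cdot \alpha(w) \otimes \alpha(v_0),
\]
while the comodule morphism property $\sum \alpha(v)_{-1} \otimes \alpha(v)_0 = \sum v_{-1} \otimes \alpha(v_0)$ turns $B(\alpha(v) \otimes \alpha(w)) = \sum \alpha(v)_{-1} \cdot \alpha(w) \otimes \alpha(v)_0$ into the same expression.

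For the HYBE I would evaluate both sides on a generic $u \otimes v \otimes w \in V^{\otimes 3}$ and reduce each to a common explicit form. Expanding $(\alpha \otimes B)(B \otimes \alpha)(\alpha \otimes B)$ step by step, pulling every $\alpha$ past the $H$-actions and coactions it meets via the two morphism properties, and then invoking comodule coassociativity in the form
\[
\sum u_{-1} \otimes (u_0)_{-1} \otimes (u_0)_0 = \sum (u_{-1})_1 \otimes (u_{-1})_2 \otimes u_0,
\]
should produce the left-hand side
\[
\sum (u_{-1})_1 v_{-1} \cdot \alpha(w) \otimes \alpha((u_{-1})_2 \cdot v_0) \otimes \alpha(u_0).
\]
Expanding $(B \otimes \alpha)(\alpha \otimes B)(B \otimes \alpha)$ analogously, applying coassociativity as soon as $u_{-1} \otimes (u_0)_{-1} \otimes (u_0)_0$ appears, and using $H$-module associativity to regroup $((u_{-1})_1 \cdot \alpha(v))_{-1} \cdot \alpha((u_{-1})_2 \cdot w) = \bigl[((u_{-1})_1 \cdot \alpha(v))_{-1} (u_{-1})_2\bigr] \cdot \alpha(w)$, should produce the right-hand side
\[
\sum \bigl[((u_{-1})_1 \cdot \alpha(v))_{-1} (u_{-1})_2\bigr] \cdot \alpha(w) \otimes ((u_{-1})_1 \cdot \alpha(v))_0 \otimes \alpha(u_0).
\]

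The decisive step is then the Yetter-Drinfel'd condition \eqref{eq:YD} applied with $x = u_{-1}$ and vector variable $\alpha(v)$. After simplifying via the morphism properties of $\alpha$ (so that $\alpha(v)_{-1} = v_{-1}$ and $(u_{-1})_2 \cdot \alpha(v_0) = \alpha((u_{-1})_2 \cdot v_0)$), \eqref{eq:YD} becomes an equality in $H \otimes V$ whose image under the linear map $h \otimes v' \mapsto h \cdot \alpha(w) \otimes v'$, tensored on the right with $\alpha(u_0)$, is precisely the desired equality of the two displays above. The main obstacle is bookkeeping rather than ideas: one must choose carefully where to invoke coassociativity on each side so that \eqref{eq:YD} lines up, and keep straight the two Sweedler presentations of $\rho^2(u)$. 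No ingredients beyond \eqref{eq:YD}, comodule coassociativity, $H$-module associativity, and the two morphism properties of $\alpha$ should be required.
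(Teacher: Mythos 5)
Your proposal is correct and follows essentially the same route as the paper: evaluate both sides of the HYBE on $u \otimes v \otimes w$, push the $\alpha$'s through via the module/comodule morphism properties, use coassociativity to line up the iterated coaction of $u$, and conclude by the Yetter--Drinfel'd condition applied with $x = u_{-1}$ and the vector $\alpha(v)$. The only cosmetic difference is notation (the paper writes $u_{-2}\otimes u_{-1}\otimes u_0$ where you write $(u_{-1})_1\otimes(u_{-1})_2\otimes u_0$) and that you spell out the compatibility check $B\circ\alpha^{\otimes 2}=\alpha^{\otimes 2}\circ B$, which the paper only asserts.
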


\begin{proof}
With $B$ defined as in \eqref{YDB}, it follows from the $H$-(co)linearity assumption on $\alpha$ that $B$ commutes with $\alpha^{\otimes 2}$.

The assumption that $\alpha \colon V \to V$ is an $H$-comodule morphism means that
\[
\rho(\alpha(v)) = (Id_H \otimes \alpha)(\rho(v)),
\]
i.e.,
\[
\sum \alpha(v)_{-1} \otimes \alpha(v)_0 = \sum v_{-1} \otimes \alpha(v_0).
\]
To simplify typography, we will omit the summation signs in $\Delta(x)$ (for $x \in H$) and in $\rho(v)$ (for $v \in V$).  Let $\gamma$ denote a typical generator $u \otimes v \otimes w \in V^{\otimes 3}$.  The left-hand side of the HYBE \eqref{eq:HYBE}, when applied to $\gamma$, is
\begin{subequations}
\allowdisplaybreaks
\begin{align*}
(\alpha \otimes B) \circ (B \otimes \alpha) \circ (\alpha \otimes B)(\gamma)
&= (\alpha \otimes B) \circ (B \otimes \alpha)\left(\alpha(u) \otimes v_{-1} \cdot w \otimes v_0\right)\\
&= (\alpha \otimes B)\left(u_{-1} \cdot (v_{-1} \cdot w) \otimes \alpha(u_0) \otimes \alpha(v_0)\right)\\
&= \left(u_{-2}v_{-1}\right) \cdot \alpha(w) \otimes u_{-1} \cdot \alpha(v_0) \otimes \alpha(u_0)\\
&= \left(u_{-2}\alpha(v)_{-1}\right)\cdot\alpha(w) \otimes u_{-1} \cdot \alpha(v)_0 \otimes \alpha(u_0).
\end{align*}
\end{subequations}
In the third equality above, we used
\[
\begin{split}
\alpha(u_{-1} \cdot (v_{-1} \cdot w))
&= \alpha((u_{-1}v_{-1}) \cdot w)\\
&= \left(u_{-1}v_{-1}\right) \cdot \alpha(w).
\end{split}
\]
Likewise, the right-hand side of the HYBE, when applied to $\gamma$, is
\begin{subequations}
\allowdisplaybreaks
\begin{align*}
(B \otimes \alpha) \circ (\alpha \otimes B) \circ (B \otimes \alpha)(\gamma)
&= (B \otimes \alpha) \circ (\alpha \otimes B)\left(u_{-1} \cdot v \otimes u_0 \otimes \alpha(w)\right)\\
&= (B \otimes \alpha)\left(u_{-2} \cdot \alpha(v) \otimes u_{-1} \cdot \alpha(w) \otimes u_0\right)\\
&= \left(u_{-2} \cdot \alpha(v)\right)_{-1} \cdot \left(u_{-1} \cdot \alpha(w)\right) \otimes \left(u_{-2} \cdot \alpha(v)\right)_0 \otimes \alpha(u_0)\\
&= \left[\left(u_{-2} \cdot \alpha(v)\right)_{-1} u_{-1}\right] \cdot \alpha(w) \otimes \left(u_{-2} \cdot \alpha(v)\right)_0 \otimes \alpha(u_0).
\end{align*}
\end{subequations}
The Yetter-Drinfel'd condition \eqref{eq:YD} (with $u_{-1}$ and $\alpha(v)$ instead of $x$ and $v$) now implies that $B$ \eqref{YDB} is a solution of the HYBE for $(V,\alpha)$.
\end{proof}

Let us discuss two special cases of Theorem ~\ref{thm:YD} that are closely related to quantum groups.  These results were first obtained in \cite{yau5}.  A \emph{quasi-triangular bialgebra} $(H,R)$ \cite{dri87,dri89} consists of a bialgebra $H$ and an invertible element $R \in H^{\otimes 2}$ such that
\[
\begin{split}
\tau(\Delta(x)) &= R \Delta(x) R^{-1},\\
(\Delta \otimes Id)(R) &= R_{13}R_{23},\\
(Id \otimes \Delta)(R) &= R_{13}R_{12}
\end{split}
\]
for $x \in H$, where $\tau$ interchanges the two tensor factors.  Here if $R = \sum s_i \otimes t_i \in H^{\otimes 2}$, then
\[
\begin{split}
R_{13}R_{23} &= \sum s_i \otimes s_j \otimes t_it_j,\\
R_{13}R_{12} &= \sum s_is_j \otimes t_j \otimes t_i.
\end{split}
\]
For example, a unital cocommutative bialgebra is quasi-triangular with $R = 1 \otimes 1$.  However, most interesting quasi-triangular bialgebras, including many quantum groups, are not cocommutative.

Let $V$ be an $H$-module for some quasi-triangular bialgebra $(H,R)$, where $R = \sum s_i \otimes t_i \in H^{\otimes 2}$.  Then $V$ becomes a Yetter-Drinfel'd module over $H$, in which the left $H$-comodule structure is
\[
\rho(v) = \sum t_i \otimes s_i \cdot v.
\]
Indeed, that $\rho$ gives $V$ a left $H$-comodule structure is a consequence of the assumption $(Id \otimes \Delta)(R) = R_{13}R_{12}$, and the Yetter-Drinfel'd condition \eqref{eq:YD} follows from $(\tau(\Delta(x)))R = R\Delta(x)$.   For this Yetter-Drinfel'd module structure on $V$, the $R$-matrix $B$ in \eqref{YDB} takes the form
\[
B(v \otimes w) = \sum t_i \cdot w \otimes s_i \cdot v = \tau(R(v \otimes w)).
\]
If $ \alpha \colon V \to V$ is an $H$-module morphism, then it is also an $H$-comodule morphism, since $\alpha (s_i \cdot v) = s_i \cdot \alpha(v)$.  Therefore, we recover the following result from \cite[Theorem 1.2]{yau5} as a special case of Theorem ~\ref{thm:YD}.

\begin{corollary}[\cite{yau5}]
\label{cor:YD}
Let $V$ be an $H$-module for some quasi-triangular bialgebra $(H,R)$, and let $\alpha \colon V \to V$ be an $H$-module morphism.  Then
\[
B = \tau \circ R
\]
is a solution of the HYBE \eqref{eq:HYBE} for $(V,\alpha)$.
\end{corollary}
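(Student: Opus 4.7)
The plan is to deduce Corollary \ref{cor:YD} as a direct specialization of Theorem \ref{thm:YD}. The paragraph immediately preceding the corollary already distills the recipe: every $H$-module over a quasi-triangular bialgebra carries a canonical Yetter-Drinfel'd structure, and the associated $R$-matrix \eqref{YDB} coincides with $\tau \circ R$. So the task is to package this into a clean derivation.

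First I would check carefully that $\rho(v) = \sum t_i \otimes s_i \cdot v$, with $R = \sum s_i \otimes t_i$, defines a left $H$-comodule structure on $V$. Coassociativity becomes the identity $(\Delta \otimes Id)(R) \cdot (1 \otimes v) = (Id \otimes \rho)(R \cdot (1 \otimes v))$ applied inside $H \otimes H \otimes V$, which unpacks precisely to the axiom $(\Delta \otimes Id)(R) = R_{13}R_{23}$. Counitality reduces to $(\varepsilon \otimes Id)(R) = 1$, a standard consequence of the quasi-triangularity axioms (obtained by applying $\varepsilon \otimes Id \otimes Id$ to $(\Delta \otimes Id)(R) = R_{13}R_{23}$).

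Next I would verify the Yetter-Drinfel'd condition \eqref{eq:YD} for this $\rho$. Written out on an element $x \in H$, $v \in V$, both sides become elements of $H \otimes V$, and after substituting the definition of $\rho$, the identity reduces to multiplying $\tau(\Delta(x)) \cdot R = R \cdot \Delta(x)$ against $1 \otimes v$ and acting in the second slot. This is exactly the quasi-cocommutativity axiom $\tau(\Delta(x)) R = R \Delta(x) R^{-1} \cdot R$, rearranged. Then I would record the automatic compatibility of $\alpha$: since $\alpha$ is $H$-linear, $(Id \otimes \alpha)\rho(v) = \sum t_i \otimes s_i \cdot \alpha(v) = \rho(\alpha(v))$, so $\alpha$ is also an $H$-comodule morphism.

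At this point Theorem \ref{thm:YD} applies, and the $R$-matrix \eqref{YDB} specializes to
\[
B(v \otimes w) \;=\; \sum v_{-1} \cdot w \otimes v_0 \;=\; \sum t_i \cdot w \otimes s_i \cdot v \;=\; \tau\bigl(R(v \otimes w)\bigr),
\]
which is the operator in the statement of the corollary. There is no real obstacle here: every ingredient is a straightforward consequence of the quasi-triangular axioms, and the only conceptual step, identifying the canonical Yetter-Drinfel'd structure, is already spelled out in the text preceding the statement.
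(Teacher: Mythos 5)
Your overall route is exactly the paper's: equip $V$ with the canonical Yetter--Drinfel'd structure $\rho(v) = \sum t_i \otimes s_i \cdot v$, observe that $H$-linearity of $\alpha$ forces $H$-colinearity, and invoke Theorem~\ref{thm:YD}, under which \eqref{YDB} specializes to $\tau \circ R$. That is the intended derivation and the final identification of $B$ is correct.

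However, there is a concrete slip in your verification of the comodule axioms: you cite the wrong quasi-triangularity axiom. With $R = \sum s_i \otimes t_i$ and $\rho(v) = \sum t_i \otimes s_i \cdot v$, coassociativity reads
\[
\sum (t_i)_1 \otimes (t_i)_2 \otimes s_i \cdot v \;=\; \sum_{i,j} t_i \otimes t_j \otimes s_j s_i \cdot v ,
\]
i.e.\ it requires comultiplying the $t_i$, which sit in the \emph{second} tensor factor of $R$. This is precisely the content of $(Id \otimes \Delta)(R) = R_{13}R_{12}$ (note $R_{13}R_{12} = \sum_{j,k} s_j s_k \otimes t_k \otimes t_j$), which is the axiom the paper points to --- not $(\Delta \otimes Id)(R) = R_{13}R_{23}$ as you claim; if you try to unpack coassociativity against the latter, the computation does not close. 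The same transposition occurs in your counit check: what is needed is $(Id \otimes \varepsilon)(R) = 1$, i.e.\ $\sum \varepsilon(t_i) s_i = 1$, not $(\varepsilon \otimes Id)(R) = 1$. The error is easily repaired (swap which axiom you use throughout), and the remaining steps --- the Yetter--Drinfel'd condition from $\tau(\Delta(x))R = R\Delta(x)$, the colinearity of $\alpha$, and the specialization of \eqref{YDB} --- are correct as written.
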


Another special case of Theorem ~\ref{thm:YD} concerns the dual of Corollary ~\ref{cor:YD}.  A \emph{dual quasi-triangular bialgebra} $(H,R)$ \cite{hay,kassel,lt,majid91,majid,sch} consists of a bialgebra $H$ and an invertible bilinear form $R$ on $H$ under the convolution product in $\Hom(H^{\otimes 2},\bk)$, such that the following conditions hold for $x,y,z \in H$:
\begin{equation}
\label{dqb}
\begin{split}
\sum y_1x_1R(x_2 \otimes y_2) &= \sum R(x_1 \otimes y_1)x_2y_2,\\
R(xy \otimes z) & = \sum R(x \otimes z_1)R(y \otimes z_2),\\
R(x \otimes yz) & = \sum R(x_1 \otimes z)R(x_2 \otimes y).
\end{split}
\end{equation}
Let $(H,R)$ be a dual quasi-triangular bialgebra and $V$ be a left $H$-comodule with structure map $\rho(v) = \sum v_{-1} \otimes v_0$.  Then $V$ has a left $H$-module structure $\lambda \colon H \otimes V \to V$ given by
\[
\lambda(x \otimes v) = x \cdot v = R(v_{-1} \otimes x)v_0
\]
for $x \in H$ and $v \in V$.  The fact that $\lambda$ gives $V$ the structure of a left $H$-module is a consequence of the last condition in \eqref{dqb}.  Equipped with the left $H$-module structure $\lambda$ and the left $H$-comodule structure $\rho$, $V$ becomes a Yetter-Drinfel'd module over $H$, in which the Yetter-Drinfel'd condition \eqref{eq:YD} is a consequence of the first condition in \eqref{dqb}.  For this Yetter-Drinfel'd module structure on $V$, the $R$-matrix $B$ in \eqref{YDB} takes the form
\begin{equation}
\label{Bdqb}
B(v \otimes w) = \sum v_{-1} \cdot w \otimes v_0 = \sum R(w_{-1}\otimes v_{-1}) w_0 \otimes v_0.
\end{equation}
If $\alpha \colon V \to V$ is a morphism of $H$-comodules, then it is also a morphism of $H$-modules because $\sum v_{-1} \otimes \alpha(v_0) = \sum \alpha(v)_{-1} \otimes \alpha(v)_0$.  Therefore, we recover the following result from \cite[Theorem 1.3]{yau5} as a special case of Theorem ~\ref{thm:YD}.

\begin{corollary}[\cite{yau5}]
\label{cor2:YD}
Let $V$ be an $H$-comodule for some dual quasi-triangular bialgebra $(H,R)$, and let $\alpha \colon V \to V$ be an $H$-comodule morphism.  Then $B$ in \eqref{Bdqb} is a solution of the HYBE \eqref{eq:HYBE} for $(V,\alpha)$.
\end{corollary}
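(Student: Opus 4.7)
The plan is to deduce the corollary directly from Theorem \ref{thm:YD} by packaging the comodule $V$ as a Yetter-Drinfel'd module over $H$. The discussion immediately preceding the statement already sketches the three facts that need to be checked: (i) that the coaction $\rho$ together with the action $x \cdot v := R(v_{-1}\otimes x)\,v_0$ makes $V$ a Yetter-Drinfel'd module over $H$; (ii) that the associated $R$-matrix from \eqref{YDB} reduces to the formula in \eqref{Bdqb}; and (iii) that every $H$-comodule endomorphism $\alpha$ of $V$ is automatically an $H$-module endomorphism for this induced action. Once these three points are established, the claim will follow at once from Theorem \ref{thm:YD}.

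For (i), I would first confirm that $\lambda$ really is a left $H$-action. The unit axiom follows from $R(-\otimes 1)=\varepsilon$, a standard consequence of setting variables to $1$ in the identities of \eqref{dqb}, combined with the counit property of $\rho$. Associativity $x\cdot(y\cdot v)=(xy)\cdot v$ comes from the third identity in \eqref{dqb}, $R(x\otimes yz)=\sum R(x_1\otimes z)R(x_2\otimes y)$, together with coassociativity of $\rho$. The Yetter-Drinfel'd condition \eqref{eq:YD}, after expanding $x_2\cdot v_0$ and $(x_1\cdot v)_0$ on both sides via the definition of $\lambda$ and using coassociativity, collapses to the first (quasi-cocommutativity) identity of \eqref{dqb} applied with $y = v_{-1}$; this is the one nontrivial Sweedler calculation. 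For (ii), one simply substitutes the definition of the action into $B(v\otimes w)=\sum v_{-1}\cdot w\otimes v_0$ to obtain $\sum R(w_{-1}\otimes v_{-1})\,w_0\otimes v_0$, matching \eqref{Bdqb}. For (iii), the comodule-morphism identity $\sum v_{-1}\otimes\alpha(v_0)=\sum\alpha(v)_{-1}\otimes\alpha(v)_0$ gives $\alpha(x\cdot v)=R(v_{-1}\otimes x)\alpha(v_0)=R(\alpha(v)_{-1}\otimes x)\alpha(v)_0=x\cdot\alpha(v)$.

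The main obstacle is the Yetter-Drinfel'd verification in (i), which requires careful Sweedler bookkeeping with the first identity of \eqref{dqb}; everything else is formal. Once this packaging is complete, Theorem \ref{thm:YD} applied to the Yetter-Drinfel'd module $V$ and the comodule-and-module morphism $\alpha$ yields the HYBE for $B$, finishing the proof.
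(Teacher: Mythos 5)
Your proposal is correct and follows exactly the route the paper takes: the discussion preceding the corollary packages the comodule $V$ as a Yetter-Drinfel'd module via $x\cdot v=R(v_{-1}\otimes x)v_0$ (module axioms from the last identity of \eqref{dqb}, Yetter-Drinfel'd condition from the first), observes that $B$ from \eqref{YDB} becomes \eqref{Bdqb} and that a comodule morphism is automatically a module morphism, and then invokes Theorem \ref{thm:YD}. No discrepancies to report.
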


It is worth noting that obtaining Corollary ~\ref{cor2:YD} as a consequence of Theorem ~\ref{thm:YD} is much simpler than proving it directly, as was done in \cite[section 5]{yau5}.

\section{Solutions of the HYBE on $n$-fold tensor products}
\label{sec:Vn}

Our final result is about constructing an infinite sequence of new solutions of the HYBE from a given one.  To describe this result, recall the generators $\sigma_i$ $(1 \leq i \leq n-1)$ in the braid group $\br$ on $n$ strands, as discussed in the paragraph containing \eqref{eq:braidrelations}.  Let $\Sigma_n$ be the symmetric group on $n$ letters.  The \emph{length} $l(\gamma)$ of a permutation $\gamma \in \Sigma_n$ is the least integer $l$ so that $\gamma$ can be decomposed into a product of $l$ transpositions $\tau_i = (i,i+1)$ $(1 \leq i \leq n-1)$.  The length $l(\gamma)$ is also equal to the number of pairs $i<j$ for which $\gamma(i) > \gamma(j)$.  A decomposition
\[
\gamma = \gamma_1 \cdots \gamma_k
\]
with each $\gamma_i \in \Sigma_n$ is called \emph{reduced} if
\[
l(\gamma) = l(\gamma_1) + \cdots + l(\gamma_k).
\]
By a well-known result of Iwahori \cite{iwahori}, there is a well-defined map $\theta \colon \Sigma_n \to \br$ given by
\[
\theta(\gamma) = \sigma_{i_1} \cdots \sigma_{i_{l(\gamma)}},
\]
where
\[
\gamma = \tau_{i_1} \cdots \tau_{i_{l(\gamma)}}
\]
is any reduced decomposition of $\gamma$ in terms of transpositions.

Now let $B \colon V^{\otimes 2} \to V^{\otimes 2}$ be a solution of the HYBE for $(V,\alpha)$ in which both $\alpha$ and $B$ are invertible.  As discussed in the Introduction \cite[Theorem 1.4]{yau5}, there is a braid group representation $\rho^B_n \colon \br \to \Aut(V^{\otimes n})$ determined by
\[
\rho^B_n(\sigma_i) = B_i,
\]
where
\[
B_i = \alpha^{\otimes (i-1)} \otimes B \otimes \alpha^{\otimes (n-i-1)}
\]
as in \eqref{eq:Bi}.  Pre-composing this map with $\theta \colon \Sigma_n \to \br$ from the previous paragraph, we have a well-defined map
\[
\rho^B_n \theta \colon \Sigma_n \to \Aut(V^{\otimes n}).
\]
We write
\[
\rho^B_n \theta(\gamma) = B^\gamma \in \Aut(V^{\otimes n})
\]
for $\gamma \in \Sigma_n$.  For positive integers $i$ and $j$, define the permutation $\chi_{ij} \in \Sigma_{i+j}$ as
\begin{equation}
\label{chi}
\chi_{ij} =
\begin{pmatrix}
1 & \cdots & i & i+1 & \cdots & i+j\\
j+1 & \cdots & j+i & 1 & \cdots & j
\end{pmatrix}.
\end{equation}
Finally, define the map
\[
\alpha_n = (\alpha^{\otimes n})^{n^2} \colon V^{\otimes n} \to V^{\otimes n},
\]
i.e., the composition of $n^2$ copies of $\alpha^{\otimes n}$.  Then we have the following result, which gives a solution of the HYBE on the $n$-fold tensor product.

\begin{theorem}
\label{thm:Vn}
Let $B \colon V^{\otimes 2} \to V^{\otimes 2}$ be a solution of the HYBE for $(V,\alpha)$ in which both $\alpha$ and $B$ are invertible, and let $n$ be a positive integer.  Then
\[
B^{\chi_{nn}} \colon V^{\otimes n} \otimes V^{\otimes n} \to V^{\otimes n} \otimes V^{\otimes n}
\]
is a solution of the HYBE for $(V^{\otimes n},\alpha_n)$.  Moreover, both $\alpha_n$ and $B^{\chi_{nn}}$ are invertible.
\end{theorem}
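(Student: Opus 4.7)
My plan is to reduce the HYBE for $B^{\chi_{nn}}$ to a braid-group identity in $\mathfrak{B}_{3n}$ and then settle that identity by a length count together with Iwahori's theorem. The key calibration is that $\alpha_n = (\alpha^{\otimes n})^{n^2}$ has exactly $n^2$ composition factors, matching $l(\chi_{nn}) = n^2$.

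Fix a reduced decomposition $\theta(\chi_{nn}) = \sigma_{i_1} \cdots \sigma_{i_{n^2}}$ in $\mathfrak{B}_{2n}$, so that $B^{\chi_{nn}} = B^{(2n)}_{i_1} \circ \cdots \circ B^{(2n)}_{i_{n^2}}$ where $B^{(2n)}_{i} = \alpha^{\otimes(i-1)} \otimes B \otimes \alpha^{\otimes(2n-i-1)}$. Writing $\alpha_n$ as the $n^2$-fold composition of $\alpha^{\otimes n}$ and applying the monoidal interchange law $(f \circ g) \otimes (h \circ k) = (f \otimes h) \circ (g \otimes k)$ term by term, I obtain
\[
B^{\chi_{nn}} \otimes \alpha_n \;=\; \bigl(B^{(2n)}_{i_1} \otimes \alpha^{\otimes n}\bigr) \circ \cdots \circ \bigl(B^{(2n)}_{i_{n^2}} \otimes \alpha^{\otimes n}\bigr) \;=\; B^{(3n)}_{i_1} \circ \cdots \circ B^{(3n)}_{i_{n^2}},
\]
since $\alpha^{\otimes(i_k - 1)} \otimes B \otimes \alpha^{\otimes(2n - i_k - 1)} \otimes \alpha^{\otimes n}$ coincides with $B^{(3n)}_{i_k}$. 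Viewed in $\Sigma_{3n}$, the product $\tau_{i_1}\cdots\tau_{i_{n^2}}$ equals $\chi_{nn}^{(12)}$, the parabolic embedding of $\chi_{nn}$ acting on the first $2n$ letters, and this remains a reduced expression because inversions are unchanged by appending fixed points. Hence $B^{\chi_{nn}} \otimes \alpha_n = \rho^B_{3n}\bigl(\theta(\chi_{nn}^{(12)})\bigr)$, and by a symmetric computation $\alpha_n \otimes B^{\chi_{nn}} = \rho^B_{3n}\bigl(\theta(\chi_{nn}^{(23)})\bigr)$, where $\chi_{nn}^{(23)}$ acts on the last $2n$ letters.

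Since $\rho^B_{3n}$ is a group homomorphism, the HYBE for $B^{\chi_{nn}}$ is implied by the braid identity
\[
\theta(\chi_{nn}^{(23)})\, \theta(\chi_{nn}^{(12)})\, \theta(\chi_{nn}^{(23)}) \;=\; \theta(\chi_{nn}^{(12)})\, \theta(\chi_{nn}^{(23)})\, \theta(\chi_{nn}^{(12)}) \quad \text{in } \mathfrak{B}_{3n}.
\]
A direct computation in $\Sigma_{3n}$ shows both products represent the same permutation $\pi$ --- the one that swaps the first and third blocks of $n$ letters and fixes the middle block. Counting inversions block-pair by block-pair gives $l(\pi) = 3n^2$, which matches the combined length $n^2 + n^2 + n^2$ of each concatenation. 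So each side is a reduced decomposition of $\pi$, and by Iwahori's theorem (cited above \eqref{chi}) each side coincides with $\theta(\pi)$, yielding the identity.

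The remaining claims are routine. Invertibility of $\alpha_n$ follows from that of $\alpha$, and $B^{\chi_{nn}}$ is a composition of the invertible operators $B^{(2n)}_{i_k}$. The compatibility $B^{\chi_{nn}} \circ \alpha_n^{\otimes 2} = \alpha_n^{\otimes 2} \circ B^{\chi_{nn}}$ on $V^{\otimes 2n}$ follows since each $B^{(2n)}_{i_k}$ commutes with $\alpha^{\otimes 2n}$ and $\alpha_n^{\otimes 2} = (\alpha^{\otimes 2n})^{n^2}$. I expect the main obstacle to be recognizing the telescoping in the first stage: the factor count $n^2$ in the definition of $\alpha_n$ is precisely what allows the interchange law to absorb each $\alpha^{\otimes n}$ into a single $B^{(2n)}_{i_k}$; once that identification is made, the rest is a combinatorial observation about reduced decompositions of block-swap permutations in $\Sigma_{3n}$.
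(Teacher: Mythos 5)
Your proposal is correct and follows essentially the same route as the paper's proof: absorb each of the $n^2$ tensor factors of $\alpha_n$ into the $n^2$ operators $B_{i_k}$ making up $B^{\chi_{nn}}$ (so that $B^{\chi_{nn}} \otimes \alpha_n$ and $\alpha_n \otimes B^{\chi_{nn}}$ become $B^{\chi_{nn}\times 1_n}$ and $B^{1_n\times\chi_{nn}}$), then identify both sides of the HYBE as images under $\rho^B_{3n}\theta$ of reduced decompositions of the same block-swap permutation of length $3n^2$, invoking Iwahori's theorem. The compatibility and invertibility arguments also match the paper's.
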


Theorem ~\ref{thm:Vn} is a generalization of \cite[Proposition 1.1]{hh}, which deals with $R$-matrices on $n$-fold tensor products.

For example, when $n=1$ we have
\[
B^{\chi_{11}} = B
\]
because $\chi_{11} = (1,2)$ and $\alpha_1 = \alpha$.  So the case $n=1$ of Theorem ~\ref{thm:Vn} is just restating the assumption that $B$ is a solution of the HYBE for $(V,\alpha)$.  When $n=2$, we have $\alpha_2 = (\alpha^{\otimes 2})^4$.  Since
\[
\chi_{22} = (2,3)(3,4)(1,2)(2,3)
\]
is a reduced decomposition of $\chi_{22}$, we have
\[
B^{\chi_{22}} = B_2 \circ B_3 \circ B_1 \circ B_2 \colon (V^{\otimes 2})^{\otimes 2} \to (V^{\otimes 2})^{\otimes 2},
\]
where $B_1 = B \otimes \alpha^{\otimes 2}$, $B_2 = \alpha \otimes B \otimes \alpha$, and $B_3 = \alpha^{\otimes 2} \otimes B$.

\begin{proof}[Proof of Theorem ~\ref{thm:Vn}]
Let us first observe that $B^{\chi_{nn}}$ commutes with $\alpha_n^{\otimes 2}$.  Note that the length of $\chi_{nn}$ \eqref{chi} is $n^2$.  By the definitions of the maps $\theta$ and $\rho^B_n$, we see that $B^{\chi_{nn}}$ is the composition of $n^2$ operators
\[
B_i \colon V^{\otimes n} \otimes V^{\otimes n} \to V^{\otimes n} \otimes V^{\otimes n}
\]
for $1 \leq i \leq 2n-1$, which are defined in \eqref{eq:Bi}.  Since each $B_i$ commutes with $\alpha^{\otimes 2n}$, so does $B^{\chi_{nn}}$.  Therefore, $B^{\chi_{nn}}$ also commutes with $(\alpha^{\otimes 2n})^{n^2} = \alpha_n^{\otimes 2}$.

Since each $B_i$ \eqref{eq:Bi} is invertible, so is $B^{\chi_{nn}}$.  Also, since $\alpha$ is invertible, so are $\alpha^{\otimes n}$ and $\alpha_n = (\alpha^{\otimes n})^{n^2}$.

It remains to check the HYBE \eqref{eq:HYBE} for $B^{\chi_{nn}}$ and $\alpha_n$.  Consider the maps $B_i$ \eqref{eq:Bi} on $V^{\otimes 2n}$.  We have
\[
\alpha^{\otimes n} \otimes B_i = \alpha^{\otimes (n+i-1)} \otimes B \otimes \alpha^{\otimes (2n-i-1)},
\]
which is $B_{n+i}$ on $V^{\otimes 3n}$.  Since $B^{\chi_{nn}}$ is the composition of $n^2$ operators $B_i$ ($1 \leq i \leq 2n-1$), it follows that
\[
\begin{split}
\alpha_n \otimes B^{\chi_{nn}}
&= (\alpha^{\otimes n})^{n^2} \otimes B^{\chi_{nn}}\\
&= B^{1_n \times \chi_{nn}},
\end{split}
\]
where $1_n$ is the identity in $\Sigma_n$ and $1_n \times \chi_{nn} \in \Sigma_{3n}$.  Likewise, we have
\[
B^{\chi_{nn}} \otimes \alpha_n = B^{\chi_{nn} \times 1_n}.
\]
It follows that the two sides of the HYBE are:
\[
\begin{split}
(\alpha_n \otimes B^{\chi_{nn}}) \circ (B^{\chi_{nn}} \otimes \alpha_n) \circ (\alpha_n \otimes B^{\chi_{nn}}) &= B^{1_n \times \chi_{nn}} \circ B^{\chi_{nn} \times 1_n} \circ B^{1_n \times \chi_{nn}},\\
(B^{\chi_{nn}} \otimes \alpha_n) \circ (\alpha_n \otimes B^{\chi_{nn}}) \circ (B^{\chi_{nn}} \otimes \alpha_n) &= B^{\chi_{nn} \times 1_n} \circ B^{1_n \times \chi_{nn}} \circ B^{\chi_{nn} \times 1_n}.
\end{split}
\]
In $\Sigma_{3n}$, both
\[
(1_n \times \chi_{nn})(\chi_{nn} \times 1_n)(1_n \times \chi_{nn})\quad\text{and}\quad
(\chi_{nn} \times 1_n)(1_n \times \chi_{nn})(\chi_{nn} \times 1_n)
\]
are reduced decompositions of the permutation
\[
\gamma =
\begin{pmatrix}
1 & \cdots & n & n+1 & \cdots & 2n & 2n+1 & \cdots & 3n\\
2n+1 & \cdots & 3n & n+1 & \cdots & 2n & 1 & \cdots & n
\end{pmatrix},
\]
since $\chi_{nn} \times 1_n$ and $1_n \times \chi_{nn}$ both have length $n^2$ and $\gamma$ has length $3n^2$.  Therefore, both sides of the HYBE are equal to $B^{\gamma}$.  This finishes the proof that $B^{\chi_{nn}}$ is a solution of the HYBE \eqref{eq:HYBE} for $(V^{\otimes n},\alpha_n)$.
\end{proof}


\end{document}